\documentclass[10pt, journal]{IEEEtran}


\usepackage{mathrsfs}
\usepackage{graphics} 

\usepackage{amsmath} 
\usepackage{amssymb}  

\usepackage{amsthm}

\usepackage{cite}
\usepackage{bm}
\usepackage{acronym}
\usepackage{paralist}
\usepackage{float}
\usepackage{color}
\usepackage{epstopdf}
\usepackage{multicol}
\usepackage{tikz}
\usepackage{hyperref}
\usepackage{graphicx}
\usepackage{mathtools}
\usepackage{soul}
\usepackage[bottom]{footmisc}

\usepackage{amsmath} 
\usepackage{amssymb}  
\usepackage{graphicx}
\usepackage{epstopdf}
\usepackage{amsmath}
\usepackage{mathtools}
\usepackage{dsfont}
\usepackage{tikz}
\usepackage{siunitx}
\usepackage{xcolor}
\usepackage[ruled,vlined,linesnumbered,noend]{algorithm2e}

\makeatletter
\hypersetup{colorlinks=true}
\AtBeginDocument{\@ifpackageloaded{hyperref}
  {\def\@linkcolor{blue}
  \def\@anchorcolor{red}
  \def\@citecolor{blue}
  \def\@filecolor{red}
  \def\@urlcolor{black}
  \def\@menucolor{red}
  \def\@pagecolor{red}
\begingroup
  \@makeother\`%
  \@makeother\=%
  \edef\x{%
    \edef\noexpand\x{%
      \endgroup
      \noexpand\toks@{%
        \catcode 96=\noexpand\the\catcode`\noexpand\`\relax
        \catcode 61=\noexpand\the\catcode`\noexpand\=\relax
      }%
    }%
    \noexpand\x
  }%
\x
\@makeother\`
\@makeother\=
}{}}
\makeatother

\newtheorem{Theorem}{Theorem}

\newtheorem{Lemma}{Lemma}

\newtheorem{Problem}{Problem}

\newtheorem{Remark}{Remark}

\newtheorem{Corollary}{Corollary}

\newtheorem{Assumption}{Assumption}

\newtheorem{Definition}{Definition}

\DeclareMathOperator*{\arginf}{arg\,inf}
\DeclareMathOperator*{\argsup}{arg\,su
p}

\IEEEoverridecommandlockouts

\def\BibTeX{{\rm B\kern-.05em{\sc i\kern-.025em b}\kern-.08em
    T\kern-.1667em\lower.7ex\hbox{E}\kern-.125emX}}

\begin{document}

\title{\LARGE{\bf Control Barrier Function based Attack-Recovery with Provable Guarantees}}

\author{Kunal~Garg, \IEEEmembership{Member}, \and Ricardo~G.~Sanfelice, \IEEEmembership{Fellow} \and
Alvaro~A.~Cardenas, \IEEEmembership{Senior Member}
\thanks{K.~Garg is with the School for Engineering of Matter, Transport and Energy at Arizona State University, R.G.~Sanfelice is with the Department of Electrical and Computer Engineering, and A.A.~Cardenas is with the Department of Computer Science and Engineering, University of California, Santa Cruz, CA, 95064, USA. E-mail(s): \texttt{kgarg24@asu.edu}, \texttt{\{ricardo, alacarde\}@ucsc.edu}.
Research partially supported by NSF Grants no. CNS-2039054 and CNS-2111688, by AFOSR Grants nos. FA9550-23-1-0145, FA9550-23-1-0313, and FA9550-23-1-0678, by AFRL Grant nos. FA8651-22-1-0017 and FA8651-23-1-0004, by ARO Grant no. W911NF-20-1-0253, DoD Grant no. W911NF-23-1-0158, AFOSR FA9550-24-1-0015, and the National Center for Transportation Cybersecurity and Resiliency (TraCR) (a U.S. Department of Transportation National
University Transportation Center) USDOT Grant \#69A3552344812. 
The views and conclusions contained in this document are those of the authors and should not be interpreted as representing the official policies, either expressed or implied, of the Army Research Office or the U.S. Government. The U.S. Government is authorized to reproduce and distribute reprints for Government purposes notwithstanding any copyright notation herein.
}

}
\maketitle
\thispagestyle{empty}

\begin{abstract}
This paper studies provable security guarantees for cyber-physical systems (CPS) under actuator attacks. In particular, we consider CPS safety and propose a new attack detection mechanism based on zeroing control barrier function (ZCBF) conditions. In addition, we design an adaptive recovery mechanism based on how close the system is to violating safety. We show that under certain conditions, the attack-detection mechanism is sound, i.e., there are no false negatives for adversarial attacks. We propose sufficient conditions for the initial conditions and input constraints so that the resulting CPS is secure by design. We also propose a novel hybrid control to account for attack detection delays and avoid Zeno behavior. Next, to efficiently compute the set of initial conditions, we propose a sampling-based method to verify whether a set is a viability domain. Specifically, we devise a method for checking a modified barrier function condition on a finite set of points to assess whether a set can be rendered forward invariant. Then, we propose an iterative algorithm to compute the set of initial conditions and input constraints set to limit the effect of an adversary if it compromises vulnerable inputs. Finally, we use a Quadratic Programming (QP) approach for online recovery (as well as nominal) control synthesis. We demonstrate the effectiveness of the proposed method in a simulation case study involving a quadrotor with an attack on its motors. 
\end{abstract}

\section{Introduction}

\subsection{Motivation}
Cyber-physical systems (CPS) such as autonomous and semi-autonomous air, ground, and space vehicles must maintain their safe operation and achieve mission objectives under various adversarial environments, including cyber-attacks. 
Security measures can be classified into two types of mechanisms~\cite{cardenascyber}: i) proactive, which considers design choices implemented in CPS \emph{before} attacks, and ii) reactive, which takes effect after an attack is detected. A proactive method, which considers design choices deployed in the CPS \emph{before} attacks, can result in a conservative design. However, reactive methods, which take effect after an attack is detected, heavily rely on fast and accurate attack-detection mechanisms. There is a plethora of work on attack detection for CPS, see, e.g.,~\cite{chen2018learning,choi2018,fengsystematic,renganathan2020distributionally}. However, as discussed in~\cite{urbina2016limiting}, a knowledgeable attacker can design stealthy attacks that can disrupt the nominal system behavior slowly to avoid these detection mechanisms. Such methods can cause system failure by pushing the system beyond its safe operating limits. An optimal approach to achieving resilience against cyber attacks must utilize the benefits of the two approaches while minimizing their limitations. 


\textit{Safety}, i.e., the system does not go out of a safe zone, is an essential requirement, violation of which can result in loss of money or human life, particularly when a system is under attack \cite{al2018cyber}. In most practical problems involving CPS, safety can be realized as guaranteeing the forward invariance of a safe set. 
One of the most common approaches to ensure that system trajectories remain in a safe set or that the safe set is forward invariant is based on a control barrier function (CBF), as it allows for a real-time implementable quadratic programming (QP)-based control synthesis framework \cite{ames2017control,garg2021sampling}. 

Most of the previous work on safety using CBFs, e.g., \cite{ames2017control}, assumes that the \textit{viability} domain, i.e., the set of initial conditions from which forward invariance of the safe set can be guaranteed, is known. In practice, it is not an easy task to compute the viability domain of a nonlinear control system. Optimization-based methods, such as Sum-of-Squares (SOS) techniques, have been used in the past to compute this domain (see \cite{wang2018permissive}). However, SOS-based approaches are only applicable to systems whose dynamics is given by polynomial functions, thus limiting their applications. Another method popularly used in the literature for computing the viability domain is Hamilton-Jacobi (HJ) based reachability analysis; see, e.g. \cite{choi2021robust}. However, such analysis is computationally expensive, particularly for higher-dimensional systems. We propose a novel sampling-based method to compute the viability domain for a general class of nonlinear control systems to overcome these limitations.

In this work, we consider a general class of nonlinear systems under actuator attacks and propose a method of computing a set of initial conditions and an input constraint set such that the system remains \textit{secure by design}. In particular, we consider actuator manipulation, where an attacker can assign arbitrary values to the input signals for a subset of actuators in a given bound. We consider the property of safety with respect to an unsafe set and propose sufficient conditions using sampling of the boundary of a set to verify whether the set is a \textit{viability domain} under attacks. Using these conditions, we propose a computationally tractable algorithm to compute the set of initial conditions and the input constraint set so that the safety of the system can be guaranteed under attacks. In effect, our proposed method results in a secure-by-design system that is resilient against actuator attacks.

In our previous work \cite{garg2021sampling}, we used a proactive scheme consisting of only designing a safe feedback law using CBF. One disadvantage of that approach is that the control is conservative because we assumed that the system could constantly be under attack. In contrast, this paper designs a reactive security mechanism that activates conservative control only after an attack is detected. We design a CBF-based attack detection mechanism and prove that it is sound, i.e., there are no false negatives in attack detection. Furthermore, we propose a hybrid control law to avoid Zeno behavior resulting from a naive switching in control policy upon attack detection. 

\subsection{Contributions}
We consider the safety property with respect to an unsafe set and propose an attack-detection mechanism based on the CBF condition for safety. We use an adaptive parameter based on how close the system is to violating the safety requirement and use this adaptive parameter in the attack detection to reduce conservatism. Based on the detection, we use a switching-based recovery from a \textit{nominal} feedback law (to be used when there is no attack) to a \textit{safe} feedback law when the system is under an adversarial attack. Then, we propose sufficient conditions using sampling of the appropriate set to verify whether the set is a \textit{viability domain} under attacks. Using these conditions, we propose a computationally tractable algorithm to compute the set of initial conditions and the input constraint set such that the system's safety can be guaranteed under attacks. In effect, our proposed method results in a secure-by-design system that is resilient against actuator attacks. Finally, we leverage these sets in a QP-based approach with provable feasibility for real-time online feedback synthesis. In contrast to the conference paper \cite{garg2021sampling,garg2022control}, this paper provides a detailed theoretical analysis and a complete proof of the analytical results. Furthermore, in this paper, we consider a more general class of dynamical systems modeled as differential inclusions, in contrast to the prior work where systems modeled under differential equations were studied. Finally, in the prior work \cite{garg2021sampling}, we used an off-the-shelf sampling algorithm based on the triangulation of spheres, while in this work, we propose a new sampling method that is computationally much more efficient than the triangulation-based methods.  The contributions of the paper are summarized below:
\begin{itemize}
    \item[1)] We present a novel attack detection mechanism using CBF conditions for safety. In the absence of knowledge of \textit{actual} system input under an attack, we utilize an approximation scheme and show that the attack-detection mechanism is sound, i.e., it does not generate any false negatives. While there is work on CBF-based safety of CPS under faults and attacks \cite{clark2020control,ramasubramanian2019linear}, to the best of the authors' knowledge, this is the first work utilizing CBF conditions for attack detection;
    \item[2)] Based on the zeroing-CBF condition \cite{ames2017control}, we propose an adaptation scheme to minimize the false-positive rate of the attack-detection mechanism. We propose a novel hybrid control law to keep the system safe under attacks with delays in detection and show that the resulting closed-loop system does not exhibit Zeno behavior;
    \item[3)] We present a novel, computationally efficient sampling technique for computing a viability domain that can be rendered forward invariant under adversarial attacks;
    \item[4)] Finally, we use a switching law for input assignment and a QP formulation for online feedback synthesis for both nominal and safe feedback. We illustrate the efficacy of the proposed method in a case study involving an attack on the motor of a quadrotor and show how the proposed framework can recover the quadrotor from an attack. 
\end{itemize}

\subsection{Organization and Notation}
The remainder of the paper is organized as follows. The formulation of the problem and the required preliminaries are presented in Section \ref{sec: Prob form}. The attack detection scheme is presented in Section \ref{sec: attack detect} while Section \ref{sec: attack recovery} presents a switched and a hybrid control scheme for attack recovery. Section \ref{sec: sampling} presents sampling-based methods for computing the necessary sets for attack recovery, and Section \ref{sec: QP control} presents a QP-based framework for online control synthesis. Section \ref{sec: numerical} presents numerical case studies, and the conclusions are presented in Section \ref{sec: conclusion}.

\noindent \textit{Notation}: Throughout the paper, $\mathbb N$ denotes the set of natural numbers ($0$ inclusive), $\mathbb R$ denotes the set of real numbers and $\mathbb R_+$ denotes the set of nonnegative real numbers. We use $|x|$ to denote the Euclidean norm of a vector $x\in \mathbb R^n$ and $|x|_{\mathcal A} = \inf_{y\in \mathcal A}|x-y|$, the distance of the point $x$ from the set $\mathcal A$. We use $\partial K$ to denote the boundary of a closed set $K\subset \mathbb R^n$ and $\textrm{int}(S)$ to denote its interior. The Lie derivative of a continuously differentiable function $h:\mathbb R^n\rightarrow\mathbb R$ along a vector field $f:\mathbb R^n\rightarrow\mathbb R^m$ at a point $x\in \mathbb R^n$ is denoted as $L_fh(x) \coloneqq \frac{\partial h}{\partial x}(x)f(x)$. The right limit of a function $z: \mathbb R_+ \to \mathbb R^n $ is given by 
$z^+ \coloneqq z(t^+) = \lim_{\tau \searrow t}z(\tau)$. The notation $\mathcal C^n$ is used to denote an $n-$times continuously differentiable function. {A continuous function $\alpha:\mathbb R_+\to\mathbb R_+$ is said to be a class-$\mathcal K$ function if it is strictly increasing and $\alpha(0) = 0$. The closure of an open set $\mathcal A$ is denoted as $\bar{\mathcal A}$.} 

\section{Problem Formulation}\label{sec: Prob form}


\subsection{System Model}
Consider a nonlinear control system $\mathcal S$ given as
\begin{align}\label{eq: actual system}
\mathcal S : \begin{cases}\dot x \in F(x,u) + d(t,x),\\
x\in \mathcal D, u\in \mathcal U,\end{cases}
\end{align}
where $F:\mathcal D\times\mathcal U\rightrightarrows \mathbb R^n$ is a known set-valued map with $\mathcal D\subset\mathbb R^n$ and $\mathcal U\subset \mathbb R^m$, $d:\mathbb R_+\times\mathbb R^n\rightarrow\mathbb R^n$ is unknown and represents the unmodeled dynamics, $x\in \mathcal D$ is the system state, and $u\in \mathcal U$ is the control input. For a given {Lebesgue measurable} input signal $u:\mathbb R_+\times\mathbb R^n\rightarrow\mathcal U$, a solution of $\mathcal S$ is a locally absolutely continuous function $x:\textrm{dom}~x\rightarrow\mathbb R^n$ satisfying $\dot x(t)\in F(x(t),u(t, x(t)))$ for almost all $t\in \textrm{dom}~x$, where $\textrm{dom}~x\subset\mathbb R_+$ is the domain of definition of $x$. A solution $x$ to $\mathcal S$ is complete if $\textrm{dom}~x$ is unbounded and is maximal if $\textrm{dom}~x$ cannot be extended.

\subsection{Attacker Model}
Similar to \cite{garg2021sampling}, in this paper, we consider attacks on the control input of the system. In particular, we consider an attack in which a subset of the components of the control input is compromised. Under such an attack, the system input takes the form:
\begin{align}\label{eq: attack model}
   u = (u_v, u_s),
\end{align}
where $u_v\in \mathcal U_v\subset \mathbb R^{m_v}$ represents the \textit{vulnerable} components of the control input that might be compromised or attacked, and $u_s\in \mathcal U_s\subset \mathbb R^{m_s}$ the \textit{secure} part that cannot be attacked, with $m_v + m_s = m$ and $\mathcal U\coloneqq \mathcal U_v\times \mathcal U_s$.
In this class of attack, we assume that we know which components of the control input are vulnerable. 

Similar attack models have been used in previous work; see, e.g., \cite{giraldo2020daria}, and can be implemented in practice by designing the dynamic range of the actuator to preserve its bounds. As discussed in \cite{pasqualetti2013attack}, various prototypical attacks, such as \textit{stealth attacks}, \textit{replay attacks}, and \textit{false-data injection attacks}, can be captured by the attack model in \eqref{eq: attack model}. In addition to representing a real-world scenario in which system actuators have physical limits, restricting the vulnerable control input $u_v$ in the set $\mathcal U_v$ has the following advantages:
\begin{itemize}
    \item[1)] It restricts how much an attacker can change the nominal operation of the system \cite{kafash2018constraining}, and can be implemented physically, so that an attacker cannot bypass it.
    \item[2)] It can be utilized to design a detection mechanism, e.g., if $u_v\notin \mathcal U_v$, a flag can be raised, signifying that the system is under attack. Schemes that raise a threshold-based flag are commonly used as detection mechanisms \cite{renganathan2020distributionally}. 
\end{itemize}

\begin{figure}[t]
	\centering
	\includegraphics[width=1\columnwidth,clip]{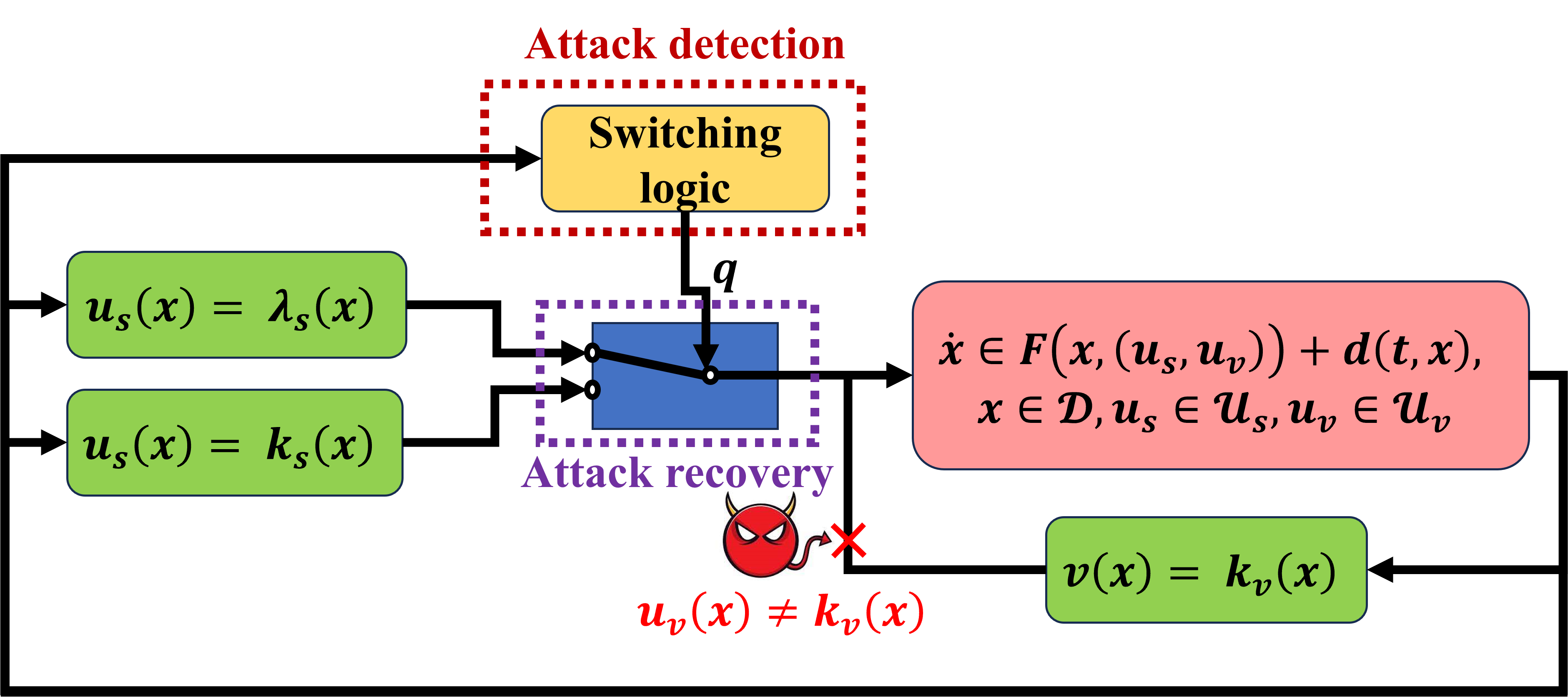}
	\caption{Overview of the proposed attack-detection-based approach for attack recovery.}
	\label{fig:overview}
\end{figure}

Under this attack model, the input to the system takes the form:
\begin{align}\label{eq: attack input}
   u(t,x) = \begin{cases} (\lambda_v(x),\lambda_s(x))  & \textrm{if}\; t\notin  \mathcal T_a \\
   (u_a(t),k_s(x)) & \textrm{if} \; t\in \mathcal T_a \end{cases},
\end{align}
where $u_a:\mathbb R_+\rightarrow \mathcal U_v$ is the attack signal on the input $u_v$, $k_s:\mathbb R_+\times\mathbb  R^n\rightarrow\mathbb R^{m_s}$ is a \textit{safe} feedback law for the input $u_s$, which is to be designed and used when the system is under attack, and the pair $\lambda_v:\mathbb R^n\rightarrow\mathcal U_v, \lambda_s:\mathbb R^n\rightarrow \mathcal U_s$ defines the \textit{nominal} feedback law $\lambda = (\lambda_v, \lambda_s)$, to be designed and used when there is no attack (see Figure \ref{fig:overview}). The set $\mathcal T_a\subset\mathbb R_+$ is the set of time intervals when an attack is launched on the system input. In particular, for each $i\geq 1$, let $[t^i_1, t^i_2)$ with $t^i_2\geq t^i_1$ denote the interval of time when the attack is launched for the $i-$th time where $t^1_1\geq 0$, so that $\mathcal T_a \coloneqq \bigcup\limits_{i\geq 0}[t^i_1, t^i_2)$. Define 
\begin{align}\label{eq: max min T attack}
    \overline T & \coloneqq \max_{i\geq 1}\{t^i_2-t^i_1\},\\
    T_{na} & \coloneqq \min_{i\geq 2}\{t^i_1-t^{(i-1)}_2\},
\end{align}
as the maximum length of the attack and the minimum length of the interval without an attack on the system input, respectively. In this work, we assume that the set $\mathcal T_a$ is unknown, and only the maximum period of attack, $\overline T$, and minimum period without an attack, $T_{na}$, are known. We make the following assumption about $\mathcal S$.

\begin{Assumption}\label{assum: d bound}
The map $(t, x) \mapsto F(x, u(t, x)) + d(t,x)$ is lower semicontinuous, has nonempty, closed, and convex values for all $(t, x)\in \mathbb R_{\geq 0}\times \mathcal D$. 
Furthermore, there exists a known $\delta>0$ such that $|d(t, x)|\leq \delta$ for all $t\geq 0$ and all $x\in \mathcal D$.
\end{Assumption}

Under Assumption \ref{assum: d bound}, from \cite[Ch. 2, Theorem 1]{aubin2012differential}, it holds that at least one solution of \eqref{eq: actual system} is continuously differentiable.\footnote{Note that there are stronger assumptions required on $F$ for uniqueness of solutions. In this work, we do not make such assumptions and allow $\mathcal S$ to have nonunique solutions.}
Now, we present the control design problem studied in this paper. Consider a nonempty, compact set $K\subset\mathbb R^n$, referred to as a safe set, to be rendered forward invariant. 



\begin{Problem}\label{Problem 1}
Given the system in \eqref{eq: actual system} with unmodeled dynamics $d$ that satisfies Assumption \ref{assum: d bound}, a set $K\subset \mathcal D$, and the attack model in \eqref{eq: attack model}, design an attack-detection mechanism to raise a flag that the system is under attack
and apply a safe input assignment policy
such that, for a set of initial conditions $X_0\subseteq K$ and attack signals $u_a:\mathbb R_+\rightarrow \mathcal U_v$, for all $t\in \textnormal{dom}~x$ and for each $x(0)\in X_0$, each closed-loop solution $x:\textnormal{dom}~x\rightarrow\mathbb R^n$ of \eqref{eq: actual system} resulting from applying the designed input policy
satisfies $x(t) \in K$.
\end{Problem}

Note that for the safety requirement as imposed in Problem~\ref{Problem 1}, an attack is adversarial only if it can push the system trajectories out of the set $K$ for any input assignment, as defined below.

\begin{Definition}\label{def: advers attack}
An attack signal $u_a:\mathbb R_+\rightarrow\mathcal U_v$ is \textnormal{adversarial} if there exist $x_0\in K$ and a finite $t\in \textnormal{dom}~x$ such that for any $\kappa:\mathbb R_+\times\mathbb R^n\rightarrow\mathcal U_s$, there exists a solution $x:\textnormal{dom}~x\rightarrow\mathbb R^n$ of \eqref{eq: actual system} resulting from applying $u = (u_a, \kappa)$ with $x(0) = x_0$ such that $x(t)\notin K$ for some $t\in \textnormal{dom}~x$. 
\end{Definition}

According to the above definition, it is possible that there is an attack on the system but the system does not violate the safety requirement. {We are not concerned about such \textit{non-adversarial} attack signals in this work.} We use this observation to focus our detection mechanism only on adversarial attacks that can potentially push the system out of the safe set.  

\subsection{Mathematical Preliminaries}

Following \cite{chai2018forward}, we define the notion of forward pre-invariance and forward invariance of a set $K\subset \mathbb R^n$ for $\mathcal S$. 

\begin{Definition}
   A set $K\subset\mathbb R^n$ is said to be \textnormal{forward pre-invariant} for system \eqref{eq: actual system} if, for each $x_0\in K$, each maximal solution $x$ starting at $x(0) = x_0$ satisfies $x(t)\in K$ for all $t\in \textnormal{dom}~x$. If, in addition, each maximal solution is complete, then the set $K$ is said to be \textnormal{forward invariant}.
\end{Definition}
Next, building from \cite{maghenem2021sufficient,chai2020forward}, we formulate a sufficient condition for guaranteeing forward pre-invariance of a set without an attack. 

\begin{Lemma}\label{lemma: nec suff safety}
Given a continuously differentiable function $B:\mathbb R^n\rightarrow\mathbb R$, the set $K = \{x\; |\; B(x)\leq 0\}$ is forward pre-invariant for $\mathcal S$ under $d$ satisfying Assumption \ref{assum: d bound} with $\delta>0$ if there exists a neighborhood $U (\partial K)$ of the boundary $\partial K$ such that
\begin{equation}\label{eq: safety cond}
    \inf_{u\in \mathcal U}\sup_{\zeta\in F(x,u)}L_{\zeta}B(x,u)\leq -l_B\delta \quad \forall x\in (U(\partial K)\setminus K),
\end{equation}
where $l_B$ is the Lipschitz constant of the function $B$.
\end{Lemma}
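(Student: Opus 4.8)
The plan is to reduce forward invariance of $S$ to a sub-tangentiality (Nagumo-type) condition on the boundary $\partial S$, and then to verify that condition \emph{robustly} against the disturbance $d$. First I would observe that, since $\delta>0$, the infimum in \eqref{eq: safety cond} is strictly negative, so at every $x\in\partial S$ there is some $u\in\mathcal U$ with $\frac{\partial B}{\partial x}(x)F(x,u)<0$; in particular $\frac{\partial B}{\partial x}(x)\neq 0$, so $\partial S=\{x : B(x)=0\}$ is a regular level set whose Bouligand tangent cone at $x$ is the half-space $\{v : \frac{\partial B}{\partial x}(x)\,v\leq 0\}$.

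Next I would exploit that $l_B$ is the Lipschitz constant of the $C^1$ function $B$, which yields the pointwise gradient bound $\left|\frac{\partial B}{\partial x}(x)\right|\leq l_B$ for all $x$. Combining this with Assumption \ref{assum: d bound} gives, by Cauchy--Schwarz, $\left|\frac{\partial B}{\partial x}(x)\,d(t,x)\right|\leq l_B\delta$ for every $t\geq 0$ and every $x\in\mathcal D$. This is the key estimate that converts the deterministic margin $-l_B\delta$ appearing in \eqref{eq: safety cond} into a robust inward-pointing guarantee.

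I would then select, at each $x\in\partial S$, a safe input $u\in\mathcal U$ with $L_FB(x,u)\leq -l_B\delta$; such a $u$ exists because the infimum in \eqref{eq: safety cond} is at most $-l_B\delta$ (attainment being guaranteed, e.g., when $\mathcal U$ is compact and $F$ is continuous). Along the resulting closed-loop solution, whenever $x(t)\in\partial S$ one computes $\dot B = \frac{\partial B}{\partial x}(x)\big(F(x,u)+d(t,x)\big) = L_FB(x,u) + \frac{\partial B}{\partial x}(x)\,d(t,x) \leq -l_B\delta + l_B\delta = 0$, and this bound holds \emph{uniformly} over all admissible disturbances. Hence the closed-loop velocity lies in the tangent cone at every boundary point, which by Nagumo's viability theorem implies that $S$ is forward invariant.

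The main obstacle is the passage from this infinitesimal boundary condition to global forward invariance. The difficulty is threefold: the inequality $\dot B\leq 0$ is only non-strict (a worst-case disturbance can force $\dot B=0$), so a naive contradiction argument at a first exit time does not immediately close; the vector field is non-autonomous through $d(t,x)$, so the autonomous form of Nagumo's theorem must be replaced by a non-autonomous or comparison-type version; and one must ensure that the selected safe feedback yields a well-defined solution. The last point is dispatched by the standing assumption that solutions of \eqref{eq: actual system} exist and are unique, while the first two are handled precisely by the viability-theoretic form of Nagumo's theorem, whose tangent-cone condition is exactly the non-strict inequality established above.
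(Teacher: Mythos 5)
There is no in-paper proof to compare your attempt against: Lemma \ref{lemma: nec suff safety} is imported verbatim from \cite{garg2021robust} with only a citation. Judged on its own merits, your argument is the standard robust-Nagumo route and its core is correct: the Lipschitz constant of $B$ gives the pointwise gradient bound $\bigl|\frac{\partial B}{\partial x}(x)\bigr|\leq l_B$, so Cauchy--Schwarz and Assumption \ref{assum: d bound} yield $\bigl|\frac{\partial B}{\partial x}(x)\,d(t,x)\bigr|\leq l_B\delta$, and the margin $-l_B\delta$ in \eqref{eq: safety cond} exactly absorbs the worst-case disturbance, giving $\dot B\leq 0$ on $\partial S$; together with your observation that the gradient cannot vanish where \eqref{eq: safety cond} holds (so the tangent cone is the half-space you describe), this is precisely the mechanism that makes the lemma true. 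Two refinements are worth making explicit beyond what you wrote. First, Nagumo's theorem for a merely continuous right-hand side gives \emph{viability} (existence of at least one solution that remains in $S$), not invariance of every solution; it is the paper's standing uniqueness assumption that upgrades viability to forward invariance, so that assumption does more work than merely ``dispatching'' well-posedness of your selected feedback, which is the only role you assign to it. Second, your pointwise selection of $u$ with $L_FB(x,u)\leq -l_B\delta$ requires the infimum in \eqref{eq: safety cond} to be attained, and the resulting map $x\mapsto u(x)$ need not be continuous, so the closed-loop field handed to Nagumo's theorem may lack the regularity that theorem presupposes; you flag both issues (compactness of $\mathcal U$, the standing solution assumptions), and they are gaps inherited from the lemma's loose statement rather than errors introduced by you. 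In short, your proof is a faithful reconstruction of the argument the paper delegates to \cite{garg2021robust}, with its remaining technical caveats honestly identified.
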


We also review a solution-based safety condition, in which the CBF is evaluated along a closed-loop solution of \eqref{eq: actual system}. 

\begin{Lemma}\label{lemma: suff cond sol based}
{Given a continuously differentiable function $B:\mathbb R^n\rightarrow\mathbb R$, under Assumption \ref{assum: d bound}, consider a $\mathcal C^1$ closed-loop solution $x:\textnormal{dom}~x\rightarrow\mathbb R^n$ with $x(0)\in K= \{x\; |\; B(x)\leq 0\}$ of $\mathcal S$ resulting from using a feedback $k:\mathbb R_+\times \mathbb R^n\rightarrow\mathcal U$ under $d$ satisfying Assumption \ref{assum: d bound}. The set $K$ is forward pre-invariant for $\mathcal S$ if}
\begin{equation}\label{eq: safety cond sol based}
    \frac{d}{dt}B(x(t))\leq 0 \quad \forall t\in \{t\in \textnormal{dom}~x\; |\; B(x(t)) = 0\}.
\end{equation}
\end{Lemma}

Finally, in this work, we use second-order Taylor's expansion of a $\mathcal C^1$ function, which requires the following notion of generalized Hessian.

\begin{Definition}{\hspace{-1pt}\cite[Def. 1.1]{cominetti1990generalized}}
    The generalized second-order gradient of a function $\phi:\mathbb R^n\rightarrow\mathbb R$ at $x\in \mathbb R^n$ in the direction $(u, v)\in \mathbb R^n\times \mathbb R^n$ is given as
    \begin{align}
        & {\phi^\infty(x, (u, v))} \nonumber \\
        = & \limsup_{
                                      y\to x \atop
                                      t, s \to 0
                                  }\frac{{\phi(y + su + tv) - \phi(y + su)- \phi(y+ tv) + \phi(y)}}{st}
    \end{align}
    and the generalized Hessian of $\phi$ at $x$ {in the direction $u\in \mathbb R^n$} is given as
    \begin{align}
        \partial^2 \phi(x,u) = \{z\in \mathbb R^n \; | \; z ^\top v \leq \phi ^\infty (x, (u, v))~ \forall v\in \mathbb R^n\}.
    \end{align}
\end{Definition}

The following lemma reviews the second-order Taylor's expansion of functions that are not $\mathcal C^2$ (adopted from \cite[Proposition 4.1]{cominetti1990generalized}) using the generalized Hessian. 
\begin{Lemma}\label{lemma: taylor}
    Given a continuously differentiable function $\psi:\textnormal{dom}~\psi\rightarrow\mathbb R$, {where $\textnormal{dom}~\psi\subset\mathbb R_+$}, with lower semicontinuous generalized Hessian $\partial^2 \psi$, for each $t, \mathcal T>0$ with {$t, t-\mathcal T\in \textnormal{dom}~\psi$,} there exists $\tau\in [0, \mathcal T]$ such that
    \begin{align}
        \psi(t) - \psi(t-\mathcal T) - \mathcal T \dot \psi(t) \in \frac{\mathcal T^2}{2}\overline{\partial^2\psi(t - \tau, \mathcal T)}.
    \end{align}
    If, in addition, for each $\mathcal T > 0$, there exists $\eta>0$ such that $\overline{\partial^2\psi(t, \mathcal T)}\leq \eta$ for all $t\in \textnormal{dom}~\psi$, then the following holds:
    \begin{align}
        \left|\frac{\psi(t)-\psi(t-\mathcal T)}{\mathcal T} - \dot \psi(t)\right| \leq \frac{\mathcal T}{2} \eta
    \end{align}
    for all $t, t-\mathcal T\in \textnormal{dom}~\psi$.
\end{Lemma}

We briefly review the notion of hybrid systems and its solutions as these concepts become useful later in the paper. A hybrid system is given as \cite{goebel2012hybrid}:
\begin{align}
    \mathcal H : \begin{cases}
        \dot z = f(z) & z\in C, \\
        z^+ = g(z) & z\in D,
    \end{cases}
\end{align}
with state variable $z\in \mathbb R^n$, flow map $f:\mathbb R^n\rightarrow\mathbb R^n$, jump map $g:\mathbb R^n\rightarrow\mathbb R^n$, flow set $C\subset\mathbb R^n$, and jump set $D\subset \mathbb R^n$. A solution to $\mathcal H$ is defined on the hybrid time domain $\textnormal{dom}~z\subset \mathbb R_+\times \mathbb N$, which parameterized the solution by continuous time $t\in \mathbb R_+$ and discrete time $j\in \mathbb N$. A hybrid time domain is a subset of $\mathbb R_+\times \mathbb N$ such that for every $(T, J)\in \textnormal{dom} z$, there exists a sequence $\{t_j\}_{j = 0}^{J+1}$ such that $t_0 = 0$, $t_{j+1}\geq t_j$ for each $j\in \{0, 1, \dots, J\}$, and $\textnormal{dom} ~z\cap ([0, T]\times \{0, 1, \dots, J\}) = \cup_{j = 0}^J[t_j, t_{j+1}], j)$ (see, e.g., \cite{goebel2012hybrid}). A solution $z$ to $\mathcal H$ is said to be \textit{complete} if $\textnormal{dom}~z$ is unbounded and is said to be \textit{Zeno} if it is complete, and the $t$ component of $\textnormal{dom}~ z$ is bounded. A solution $z$ is said to be \textit{maximal} if there does not exist a solution $y$ to $\mathcal H$ such that $\textnormal{dom} ~ z\subset \textnormal{dom}~y$. 

\section{Attack Detection}\label{sec: attack detect}
\subsection{CBF-based Detection}
In this section, we present a method of detecting whether the system \eqref{eq: actual system} is under attack using the barrier function condition \eqref{eq: safety cond}. In particular, {if the inequality \eqref{eq: safety cond} is violated on the boundary of the safe set, then an adversarial attack is flagged}. In contrast to using the value of the barrier function $B$, we use the value of its time derivative as it includes the system dynamics. 
Hence, the value of the time derivative of the function $B$ is a better indicator of whether the given system will violate the given safety constraint compared to the value of the function $B$ itself, which does not capture the system information. Note that if an attack signal $u_a$ is adversarial as per Definition \ref{def: advers attack}, then it holds that there exists a finite time $t\geq 0$ such that $x(t)\in (U(\partial K)\setminus K)$ and $\inf_{u_s\in \mathcal U_s}\sup_{\zeta\in F(x,(u_a, u_s))}L_{\zeta}B(x,u)>-l_B\delta$. Using this, a detection mechanism can be devised to flag that the system input is under attack. When the input $u$ to the system is known at time $t$ when $x(t)\in \partial K$, we propose an attack detection mechanism that checks the value of $\sup_{\zeta\in F(x,u)} L_\zeta B(x(t),u)$ to flag an attack. 

However, in the presence of an unknown attack, it is not possible to know the actual input $u$ to the system. Thus, it is not possible to use the evaluation of $L_\zeta B$ to flag an attack. 
To this end, we can obtain second-order Taylor expansion of the function $B$, evaluated along a closed-loop system trajectory $x:\mathbb R_+\rightarrow\mathbb R^n$ in order to obtain an approximation of the time derivative $\dot B(x(t))$ when $u$ is unknown. 

Let $\tau>0$ be the sampling-time period for the first-order approximation of $\dot B$ using consecutive measurements of the function $B$. 
Under the assumption that the function $B$ is continuously differentiable, it follows that for any continuously differentiable solution $x:\textnormal{dom}~x\rightarrow\mathbb R^n$ of \eqref{eq: actual system}, the composite function $B \circ x$ is continuously differentiable on $\textnormal{dom}~x$. 
Define $e_B:\textnormal{dom}~x\rightarrow\mathbb R$ as
\begin{align*}
    e_B(t) \coloneqq \left|\frac{d}{dt}B(x(t)) - \frac{B(x(t))-B(x(t-\tau))}{\tau}\right|,
\end{align*}
which is the error between the derivative of the function $B$ and its first-order approximation. 

In order to obtain a bound on $e_B$, we make the following assumption.
\begin{Assumption}\label{assum: sec der bound}
For each continuously differentiable solution $x:\textnormal{dom}~x\rightarrow\mathbb R^n$ of \eqref{eq: actual system} under an input $u:\mathbb R_+\rightarrow\mathcal U$ with $x(0)\in K$, there exist $\tau, \eta>0$ such that
\begin{align}
    \left|{\partial^2 B(x(t), x(t)-x(t-\tau))}\right|\leq \eta
\end{align}
for all $t\in \textnormal{dom}~x$. 
\end{Assumption}

{
\begin{Remark}
Assumption \ref{assum: sec der bound} aids Lemma \ref{lemma: taylor} by assuming the required bound of the generalized Hessian of the map $B\circ x$. Per discussion in \cite{cominetti1990generalized}, the map $B\circ x$ satisfies conditions of Lemma \ref{lemma: taylor} if it is of class $\mathcal C^{1,1}$, i.e., it is continuously differentiable with a Lipschitz continuous gradient. We leave further relaxation of this regularity condition as future work and refer the interested reader to the related literature \cite{clarke1995proximal,rockafellar1981favorable,poliquin1996generalized}.
\end{Remark}
}

Let $x:\mathbb R_+\rightarrow\mathbb R^n$ be the solution of \eqref{eq: actual system} resulting from applying the input $u:\mathbb R_+\rightarrow \mathcal U$. Under Assumptions \ref{assum: d bound} and \ref{assum: sec der bound}, using Lemma \ref{lemma: taylor}, it holds that 
\begin{align*}
    \left|\frac{B(x(t)-B(x(t-\tau))}{\tau} - \dot B(x(t))\right| \leq  \eta\frac{\tau}{2},
\end{align*}
{for each $t\geq \tau$ and $t\in \textnormal{dom}~x$}. 
For the sake of brevity, define 
\begin{align}\label{eq: dot B hat}
    \hat{\dot B}(x(t),\tau) \coloneqq \frac{B(x(t))-B(x(t-\tau))}{\tau},
\end{align}
so that we have
\begin{align*}
  e_B(t) = |\dot B(x(t))-\hat{\dot B}(x(t),\tau)| \leq \frac{\eta \tau}{2}. 
\end{align*}
Thus, it holds that $e_B(t) \leq \frac{\eta \tau}{2}$. Using the bound on $e_B$, we obtain that for each $t\geq 0$, the following holds:
\begin{align}\label{eq: H B bound}
    \hat{\dot B}(x(t),\tau)-\frac{\eta\tau}{2}\leq \dot B(x(t))\leq \hat{\dot B}(x(t),\tau)+\frac{\eta\tau}{2}.
\end{align}
Then, with $t, \tau\geq 0$, it follows that
\begin{align*}
 \hat{\dot B}(x(t),\tau)+\frac{\eta\tau}{2}\leq 0   \implies \dot B(x(t))\leq 0.
\end{align*}
With the above construction, we propose the following attack detection mechanism:
\begin{itemize}
    \item[1)] Given $\tau>0$ and $\bar t\geq 0$ such that $x(\bar t)\in \partial K$, evaluate $\hat {\dot B}(x(\bar t), \tau)$. 
    \item[2)] If $\hat {\dot B}(x(\bar t), \tau)>-\frac{\eta\tau}{2}$, raise a flag that the system is under attack.
\end{itemize}

More concisely, we define the time when a flag for an attack is raised as
\begin{align}\label{eq: t_d estimate}
    \hat t_d = \inf\left\{t\; \Big |\; \hat{\dot B}(x(t),\tau)>-\frac{\eta\tau}{2}, x(t)\in \partial K\right\},
\end{align}
where $\eta$ is the bound on the generalized Hessian $\partial ^2B$ and $\tau>0$.
We have the following result stating that the attack detection mechanism in \eqref{eq: t_d estimate} detects the attack before the system trajectories leave the safe set. 

\begin{Lemma}\label{Lemma td hat td relation}
Given a twice continuously differentiable function $B$, system \eqref{eq: actual system} with $d$ satisfying Assumption \ref{assum: d bound}, a continuously differentiable map $F$, and an adversarial attack starting at $t = t^i_1$, let $T\geq t^i_1$ be defined as 
\begin{align}
    T = \inf\Big\{t\geq t^i_1\; |\; \dot B(x(t)) & > 0, x(t)\in \partial K\Big\},
\end{align}
where $x:\textnormal{dom}~x\rightarrow\mathbb R^n$ is any solution of \eqref{eq: actual system} resulting from applying the input $u:\textnormal{dom}~x\rightarrow \mathcal U$ with $x(0)\in K$ and $\eta$ is as per Assumption \ref{assum: sec der bound}. Then, for each $\tau\geq 0$, it holds that $\hat t_d\leq T$, where $\hat t_d$ is given in \eqref{eq: t_d estimate}.
\end{Lemma}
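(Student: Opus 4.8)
The plan is to reduce the claim to a set-inclusion between the index sets defining $T$ and $\hat t_d$, exploiting the one-sided bound on $H$ already established in \eqref{eq: H B bound}. The only ingredient I would draw from the preamble to the statement is the upper bound $H(x(t),u(t)) \leq \hat{\dot B}(x(t),\tau) + \frac{\eta\tau}{2}$, valid for every $t \geq 0$ and every $\tau \geq 0$, which is precisely what underlies the implication $\hat{\dot B}(x(t),\tau) + \frac{\eta\tau}{2} \leq 0 \implies H(x(t),u(t)) \leq 0$ displayed just before the detection mechanism is introduced.

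First I would take the contrapositive of that implication: whenever $H(x(t),u(t)) > 0$, it must hold that $\hat{\dot B}(x(t),\tau) + \frac{\eta\tau}{2} > 0$, i.e.\ $\hat{\dot B}(x(t),\tau) > -\frac{\eta\tau}{2}$. This step converts the condition appearing in the definition of $T$ into exactly the condition appearing in the definition of $\hat t_d$ in \eqref{eq: t_d estimate}, while the boundary requirement $x(t)\in\partial S$ is common to both.

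Next I would introduce the two index sets
\[
A := \{t \geq t^i_1 \mid H(x(t),u(t,x(t))) > 0,\ x(t) \in \partial S\}, \qquad B := \{t \mid \hat{\dot B}(x(t),\tau) > -\tfrac{\eta\tau}{2},\ x(t) \in \partial S\},
\]
so that $T = \inf A$ and $\hat t_d = \inf B$. The contrapositive above shows that every $t \in A$ satisfies the two defining conditions of $B$, hence $A \subseteq B$. Monotonicity of the infimum under set inclusion then gives $\hat t_d = \inf B \leq \inf A = T$, which is the desired conclusion. I would handle the degenerate case $A = \varnothing$ separately, where $T = +\infty$ and the inequality is vacuous.

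I expect essentially no analytic obstacle here, since the substantive work---the second-order Taylor estimate yielding $e_B(t) \leq \frac{\eta\tau}{2}$ and thus \eqref{eq: H B bound}---is completed before the statement. The only point requiring mild care is that the inequalities defining $T$ and $\hat t_d$ are \emph{strict}, so I would argue at the level of the index sets rather than by evaluating the two infima at their (possibly non-attained) infimizing points; the set-inclusion formulation sidesteps any need to check whether the strict inequalities actually hold at $t = T$ itself.
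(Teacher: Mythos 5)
Your proof is correct and rests on exactly the same ingredient as the paper's: the upper bound $H(x(t),u(t))\leq \hat{\dot B}(x(t),\tau)+\frac{\eta\tau}{2}$ from \eqref{eq: H B bound}. The paper packages this as a proof by contradiction (choosing a time in $(T,\hat t_d)$ where both $H>0$ and $\hat{\dot B}\leq-\frac{\eta\tau}{2}$ would have to hold), whereas your set-inclusion/monotonicity-of-infimum formulation is the direct contrapositive of the same argument---and, if anything, handles the non-attainment of the infima more cleanly.
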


\begin{proof}
{Under the smoothness assumptions on $F,B$, the map $B\circ x$ satisfies the conditions of Lemma \ref{lemma: taylor}, which enables the existence of $\eta$ per Assumption \ref{assum: sec der bound}.} If $\hat t_d>T$, it holds that there exists $t\in (T, \hat t_d)$ such that $\dot B(x(t))>0$ and $\hat{\dot B}(x(t),\tau)\leq-\frac{\eta\tau}{2}$. Using this along with the second inequality in  \eqref{eq: H B bound} at time instant $t$, we obtain that
\begin{align*}
    0< \dot B(x(t)) \leq \hat{\dot B}(x(t),\tau)+\frac{\eta \tau}{2}\leq 0,
\end{align*}
which is a contradiction and hence, $\hat t_d\leq T$. 
\end{proof}

Lemma \ref{Lemma td hat td relation} implies that the attack-detection mechanism in \eqref{eq: t_d estimate} raises an alert on or before the system trajectories reach the boundary of the set $\partial K$ under an attack. In other words, while the detection-mechanism \eqref{eq: t_d estimate} can have false positives (i.e., raise an alert when there is no attack), it will never have a false negative (i.e., it will not miss any attack).

\subsection{Adaptive Scheme for ZCBF-based Attack Detection}
One of the limitations of using the inequality \eqref{eq: safety cond} at the boundary of the safe set $K$ for detecting an attack is that it is not robust due to the following two reasons: (i) any small measurement uncertainty or disturbance can lead to violation of safety, and (ii) any nonzero delay in responding to the attack can lead to violation of safety. Assume that the set $K$ is compact and let $K_c \coloneqq \{x\; |\; B(x)\leq -c\}$ be a sublevel set of the function $B$ for a given $c\geq 0$. Using this, one method to make the detection method robust is to check the inequality at the boundary of the set $K_c$ for some $c>0$. Define $c_M\in \mathbb R$ as
\begin{align}\label{eq: cM}
c_M \coloneqq -\min\limits_{x\in K}B(x),
\end{align}
so that the set $K_c$ is nonempty for all $c\in [0, c_M)$.\footnote{Compactness of the set $K$ guarantees existence of $c_M\in \mathbb R_+$.} Define 
\begin{align}\label{eq: func H}
    H(x) \coloneqq \inf_{u\in \mathcal U}\sup_{\zeta\in F(x,u)}L_{\zeta}B(x,u) + l_B\delta.
\end{align}
Now, since it is possible to allow the function $H$ to take positive values in the interior of the safe set $K$, we use the inequality $H(x)\leq \gamma$ for some $\gamma>0$ instead of $H(x)\leq 0$, to detect attacks. Note that a constant $\gamma>0$ might lead to false positives if $\gamma$ is too small or false negatives if $\gamma$ is too large. To this end, we make the following assumption when the system is not under attack. 

\begin{Assumption}\label{assum: dot B bar c}
There exist $\bar c\in (0, c_M)$, $\bar \delta\in \mathbb R$ and a continuous feedback $\bar k:\mathbb R^n\rightarrow\mathcal U$ such that the following inequality holds for all $x\in K\setminus\textnormal{int}(K_{\bar c})$:
\begin{align}\label{eq: dot H bar c}
    \inf_{u\in \mathcal U}\sup_{\zeta\in F(x,u)}L_{\zeta}B(x,u)\leq -\bar \delta B(x) - l_B \delta,
\end{align}
where $K_{\bar c} = \{x\; |\; B(x) \leq -\bar c\}$, $\delta>0$ is the bound on the disturbance $d$ from Assumption \ref{assum: d bound}, and $l_B>0$ is the Lipschitz constant of the function $B$. 
\end{Assumption}
Similar assumptions have been made in the literature on safety using ZCBFs (see, e.g., \cite{ames2017control}). Note that under Assumption \ref{assum: dot B bar c}, using the comparison lemma, it can be shown that 
\begin{align}\label{eq: B comp lemma }
\dot B(x(t)) &  \leq -\bar \delta B(x(t)) 
\implies 
 B(x(t)) \leq  B(x(\bar t))e^{-\bar\delta(t-\bar t)},
\end{align}
for all $t\geq \bar t$, where $\bar t = \inf\{ t\; |\; x(t)\in \partial K_{\bar c}\}$ and $x:\mathbb R_+\rightarrow\mathbb R^n$ is the solution of \eqref{eq: actual system} resulting from applying the feedback $\bar k$. Now, we design an adaptive scheme for the parameter $\gamma$. Let $\gamma:\mathbb R_+\rightarrow\mathbb R_+$ be an adaptive parameter whose adaptation law is given as 
\begin{align}\label{eq: gamma delta c}
    \gamma(t) = -\bar \delta B(x(t)),
\end{align}
for $t\geq \bar t$, where $\delta>0$ is as defined in Assumption \ref{assum: d bound} and $\bar\delta$ is as defined in Assumption \ref{assum: dot B bar c}. Note that under Assumption \ref{assum: dot B bar c}, there exists a feedback law $\bar u:\mathbb R^n\rightarrow\mathcal U$ such that $\dot B(x(t))\leq \gamma(t)$ for all $t\geq \bar t$, where $x$ is the resulting trajectory under $\bar u$. Using this observation, we propose a new attack-detection mechanism that raises a flag for the $i-$th time at $t = \hat t_d^i$, where 
\begin{align}\label{eq: t_d estimate adapt}
    \hat t_d^i = \inf\Big\{t \geq \max\left\{\bar t, \hat t_d^{(i-1)}\right\}\; \Big |\; & \hat{\dot B}(x(t),\tau) > \gamma(t)-\frac{\eta\tau}{2},\nonumber \\
    & x(t)\in K\setminus\textnormal{int}(K_{\bar c})\Big\},
\end{align}
where $\eta$ is the bound on the generalized Hessian $\partial^2 B$, $\gamma$ is as defined in \eqref{eq: gamma delta c}, $\hat t_d^0 = -\overline T$, and $\tau>0$.
\begin{Remark}
Under an attack, the proposed detection mechanism allows the system to get closer to the boundary of the safe set as long as the rate at which the system approaches the boundary (dictated by the time derivative function $\dot B$) is bounded according to Assumption \ref{assum: dot B bar c}. Also, it should be noted that the proposed attack detection mechanism focuses on detecting only \textnormal{{adversarial}} attacks (see Definition \ref{def: advers attack}), and not every attack. That is, if there is an attack on the system that cannot push the state out of the safe set, the proposed detection mechanism will not detect it. Thus, the proposed mechanism will have false positives as well as false negatives (for non-adversarial attacks). 
\end{Remark}


\section{Attack Recovery}\label{sec: attack recovery}
\subsection{Switching Control Law for Recovery}
In this section, we present a switching-based control assignment to recover from an adversarial attack based on the detection mechanism \eqref{eq: t_d estimate adapt} from the previous section. To this end, we make the following assumption. 

\begin{Assumption}\label{assum: Sc set}
Given the compact set $K = \{x \; |\; B(x) \leq 0\}$ and system $\mathcal S$ in \eqref{eq: actual system}, there exists $\bar c\in (0, c_M)$, where $c_M$ is as given in \eqref{eq: cM}, such that the following hold for each $x\in  K\setminus \textnormal{int} (K_{\bar c})$:
\begin{align}\label{eq: Assum Sc attack}
    \inf_{u_s\in \mathcal U_s}\sup_{u_a\in \mathcal U_a}\sup_{\zeta\in F(x,(u_a,u_s))}L_{\zeta} B(x,(u_a, u_s))  & \leq -l_B\delta, 
\end{align}
where $K_{\bar c} = \{x\; |\; B(x) \leq -\bar c\}$, $\delta>0$ is as defined in Assumption \ref{assum: d bound}, and $l_B>0$ is the Lipschitz constant of the function $B$. 
\end{Assumption}

The above assumption implies that 
the set $K_{c}$ can be rendered forward invariant under any attack $u_a\in \mathcal U_a$ for any $c\in (0, \bar c]$.
Based on the detection scheme in the previous section, we propose a switching-based control assignment for attack recovery. 
Consider a time-interval $[t^{(i-1)}_2, t^i_1)$ over which the system input is not under attack and suppose it is under an attack over $[t^i_1, t^i_2)$. Define $\mathcal T_d \coloneqq \bigcup_{j = 0}^{\infty} [\hat t_d^j, \hat t_d^j+\overline T)$ as the set of time intervals when an attack is flagged, where $\hat t_d^j$ is the time when the attack is flagged for the $j-$th time, $j\geq 0$, with $\hat t_d^0 = -\overline T$. Since $\mathcal T_a$ in \eqref{eq: attack input} is unknown, the system input is defined as
\begin{align}\label{eq: switch input}
     u(t,x) & = \begin{cases}(\lambda_v(x), \lambda_s(x)) & \textrm{if}\quad  t\notin \mathcal T_a \bigcup \mathcal T_d,\\
    (u_a(t), \lambda_s(x)) & \textrm{if} \quad t\in \mathcal T_a\setminus \mathcal T_d, \\
    (u_a(t), k_s(x)) & \textrm{if} \quad t\in \mathcal T_a \bigcap \mathcal T_d, \\
    (\lambda_v(x), k_s(x)) & \textrm{if} \quad t\in \mathcal T_d \setminus \mathcal T_a. \\
    \end{cases}
\end{align}
{Recall that $(\lambda_v, \lambda_s)$ constitute the nominal feedback laws, $u_a$ the attack signal and $k_s$ the recovery feedback law. Note that the secure inputs switch from nominal feedback $\lambda_s$ to $k_s$ upon detection of the attack (i.e., $t\in \mathcal T_d$), while the vulnerable inputs switch from $\lambda_v$ to $u_a$ when the attack begins (i.e., $t\in \mathcal T_a$).}

We have the following result showing the existence of nominal and safe feedback laws for \eqref{eq: switch input} that can recover the system from an attack.

\begin{Theorem}\label{thm: suff safety under attack}
Given system \eqref{eq: actual system} with $F\in \mathcal C^1$, $B\in \mathcal C^2$ and the attack model \eqref{eq: attack model}, suppose that Assumption \ref{assum: d bound} holds, and Assumptions \ref{assum: dot B bar c}-\ref{assum: Sc set} hold for some $\bar c\in (0, c_M)$. 
Then, there exist feedback laws $\lambda_v:\mathbb R^n\rightarrow\mathcal U_v$, $\lambda_s:\mathbb R^n\rightarrow\mathcal U_s$ and $k_s:\mathbb R^n\rightarrow\mathcal U_s$ such that under the effect of the input $u$ in \eqref{eq: switch input} with $\hat t_d^j$ is defined in \eqref{eq: t_d estimate}, $\textnormal{dom}~x = \mathbb R_+$ and 
the system trajectories of \eqref{eq: actual system} resulting from applying \eqref{eq: switch input} satisfy $x(t)\in K$ for all $t\geq 0$ and for all $x(0)\in X_0 = \textnormal{int}(K)$. 
\end{Theorem}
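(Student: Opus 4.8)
The plan is to establish forward invariance of $S$ by proving that $B(x(t))\le 0$ for all $t\ge 0$ along every closed-loop trajectory generated by the switched input \eqref{eq: switch input}, combining the soundness of the detector with the invariance margins supplied by Assumptions~\ref{assum: dot B bar c} and~\ref{assum: Sc set}. First I would synthesize the two feedback laws pointwise by quadratic programs: the nominal law $\lambda$ is chosen so that $H(x,\lambda(x))\le -\bar\delta B(x)$ on $S\setminus\textnormal{int}(S_{\bar c})$, which is feasible since Assumption~\ref{assum: dot B bar c} exhibits a continuous $\bar u$ meeting exactly this bound; the safe law $k_s$ is chosen so that $\sup_{u_a\in\mathcal U_a}H(x,(u_a,k_s(x)))\le 0$ on the same layer, feasible by Assumption~\ref{assum: Sc set}. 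The one algebraic fact I would use repeatedly is that the robust term $l_B\delta$ in \eqref{eq: H def} dominates the disturbance: since $|\tfrac{\partial B}{\partial x}d|\le l_B\delta$ under Assumption~\ref{assum: d bound}, we have $\dot B\le H$, so that $H\le 0\Rightarrow \dot B\le 0$ and $H\le -\bar\delta B\Rightarrow \dot B\le -\bar\delta B$.

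The core is an invariance argument that partitions $S$ into the deep region $\textnormal{int}(S_{\bar c})=\{B<-\bar c\}$ and the boundary layer $S\setminus\textnormal{int}(S_{\bar c})=\{-\bar c\le B\le 0\}$, since the detector is active only on the latter. On the layer I would distinguish two regimes. If a flag is raised (so $t\in\mathcal T_d$), then $u_s=k_s$ and the chosen safe law gives $\dot B\le H(x,(u_a,k_s(x)))\le 0$ for any attack, so $B$ cannot increase and the trajectory cannot cross $\partial S$. If no flag is raised while in the layer, then by \eqref{eq: t_d estimate adapt} we have $\hat{\dot B}(x(t),\tau)\le\gamma(t)-\tfrac{\eta\tau}{2}$, whence $\dot B\le H\le\hat{\dot B}+\tfrac{\eta\tau}{2}\le\gamma(t)$; integrating $\dot B\le\gamma(t)=\bar\delta\bar c e^{-\delta(t-\bar t)}$ from the instant $\bar t$ at which the trajectory entered the layer (where $B(\bar t)=-\bar c$) bounds the total increase of $B$ by $\int_{\bar t}^{\infty}\gamma\le\bar c$, so $B(t)\le 0$ throughout. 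In the deep region any trajectory must first reach $\partial S_{\bar c}$ before approaching $\partial S$, and the soundness of the detector (Lemma~\ref{Lemma td hat td relation}, and its adaptive counterpart \eqref{eq: t_d estimate adapt}) guarantees that an adversarial attack is flagged while still on the layer, before $H$ becomes positive at $\partial S$; thus the safe regime is entered in time. Concatenating these cases yields $B(x(t))\le 0$, i.e. $x(t)\in S$, on each subinterval.

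It then remains to glue the intervals together across the sequence of attacks. On a no-attack, no-flag interval the nominal law yields $\dot B\le -\bar\delta B$, so by the comparison lemma $B$ decays back below $-\bar c$, returning the state to $\textnormal{int}(S_{\bar c})$ and re-arming the buffer; the definition $\mathcal T_d=\bigcup[\hat t_d^j,\hat t_d^j+\overline T)$ ensures the safe law stays active for the full maximal attack length $\overline T\ge t^i_2-t^i_1$, so the safe regime covers every instant of the $i$-th attack during which the state lies in the dangerous layer. For repeated attacks I would invoke periodic safety of $S_{\bar c}$ relative to $S$ over the horizon $T_{na}$ (as in the Remark following Assumption~\ref{assum: Sc set}): provided $T_{na}$ is long enough for the nominal decay to drive the state back into $S_{\bar c}$, the single-attack invariance argument applies verbatim to each subsequent attack, closing the induction.

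The main obstacle I anticipate is the timing bookkeeping at the seams: one must verify that the detection instant $\hat t_d^i$ always falls within $[t^i_1,t^i_1+\overline T)$ so that the $\overline T$-length safe buffer indeed overlaps the live attack, and simultaneously that the uncontrolled excursion of $B$ during the detection delay (while on the layer but not yet flagged) is exactly absorbed by the margin $\bar c$ through the adaptive gain $\gamma$ in \eqref{eq: gamma delta c}. Reconciling the two exponents $\bar\delta$ and $\delta$ appearing in $\gamma$ so that $\int_{\bar t}^{\infty}\gamma\le\bar c$ holds, and confirming that $T_{na}$ suffices for the nominal recovery, are the quantitative conditions on which the whole argument hinges.
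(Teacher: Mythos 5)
Your proposal is correct and follows the same overall strategy as the paper's proof: a case analysis over the flag/attack status of each time instant, with Assumption \ref{assum: dot B bar c} supplying the nominal feedback $\lambda$ off the flagged intervals and Assumption \ref{assum: Sc set} supplying the recovery feedback $k_s$ on them. The genuine difference is in your treatment of the unflagged case. The paper's Case 1 simply asserts that, absent a flag, either $x(t)\in\textnormal{int}(S_{\bar c})$ or $H(x,u)\le 0$ on the layer $S\setminus\textnormal{int}(S_{\bar c})$; but with the adaptive detector \eqref{eq: t_d estimate adapt} (which is what both you and the paper's proof actually invoke, even though the theorem statement cites \eqref{eq: t_d estimate}), the absence of a flag only yields $H\le\gamma(t)$ with $\gamma(t)>0$, so $B$ may still increase during the detection delay. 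Your integration step, $B(t)\le-\bar c+\int_{\bar t}^{t}\gamma(s)\,ds$, with the total excursion absorbed by the margin $\bar c$, is precisely what is needed to close this gap, and it is more rigorous than the paper's one-line claim. You also correctly isolate the quantitative condition on which it hinges: with $\gamma$ as written in \eqref{eq: gamma delta c}, $\int_{\bar t}^{\infty}\gamma(s)\,ds=\bar\delta\bar c/\delta$, which is at most $\bar c$ only if $\bar\delta\le\delta$; the exponent $\delta$ there is almost surely a typo for $\bar\delta$ (it should be the ZCBF rate of Assumption \ref{assum: dot B bar c}, not the disturbance bound of Assumption \ref{assum: d bound}), in which case the integral equals $\bar c$ exactly and your bound $B\le 0$ is tight. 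Your temporal gluing across repeated attacks (nominal decay back into $S_{\bar c}$ during no-attack intervals, and the $\overline T$-length flag window covering the post-detection portion of each attack since $\hat t_d^j+\overline T\ge t^i_1+\overline T\ge t^i_2$) is likewise consistent with the paper's argument, which leaves this bookkeeping implicit; so your anticipated ``obstacle'' at the seams is resolvable and not a flaw in your approach.
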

\begin{proof}
{Let $x:\textnormal{dom}\to \mathbb R^n$ be a solution of \eqref{eq: actual system} under the input \eqref{eq: switch input} with initial condition $x(0)\in \textnormal{int}(K)$} and consider the four cases: $t\in \mathcal T_a\setminus \mathcal T_d$, $t\in \mathcal T_a\cap \mathcal T_d$, $t\in \mathcal T_d\setminus\mathcal T_a$ and $t\notin \mathcal T_a \bigcup \mathcal T_d$.

Case 1: $t\in \mathcal T_a\setminus \mathcal T_d$. Since $t\notin \mathcal T_d$, from the definition of $\hat t_d$ in \eqref{eq: t_d estimate adapt}, it holds that either $x(t)\in \textnormal{int}(K_{\bar c})$ or $x(t)\in K\setminus\textnormal{int}(K_{\bar c})$ and $H(x)\leq 0$ where $H$ is defined in \eqref{eq: func H}. Thus, it holds that $x(t)\in \textnormal{int}(K)$ for all $t\in \mathcal T_a\setminus \mathcal T_d$.

Case 2: $t\in \mathcal T_a\cap \mathcal T_d$. Per Assumption \ref{assum: Sc set}, {it holds that there exists a feedback law $k_s:\mathbb R^n\rightarrow\mathcal U_s$, given as
\begin{align*}
    k_s(x) = \arginf\limits_{u_s\in \mathcal U_s}\sup_{u_a\in \mathcal U_a}\sup_{\zeta\in F(x,(u_a,u_s))}L_{\zeta}B(x, (u_a, u_s)),
\end{align*}}\textnormal
such that the set $K_{\hat c}$ is forward invariant for \eqref{eq: actual system} with $u(t,x) = (u_a(t), k_s(x))$ for any $u_a:\mathbb R_+\rightarrow\mathcal U_v$. Thus, it holds that $x(t)\in \textnormal{int}(K\setminus\textnormal{int}(K_{\bar c}))\subset\textnormal{int}(K)$ for all $t\in \mathcal T_a\cap \mathcal T_d$.

Case 3: $t\in \mathcal T_d\setminus \mathcal T_a$. Since in this time interval, there is no attack, {the feedback law $k_s$ can be defined as
\begin{align*}
    k_s(x) = \arginf\limits_{u_s\in \mathcal U_s}\sup_{\zeta\in F(x,(\lambda_v(x),u_s))}L_{\zeta}B(x, (\lambda_v(x), u_s)).
\end{align*}}\textnormal
Thus, $x(t)\in \textnormal{int}(K\setminus\textnormal{int}(K_{\bar c})\subset\textnormal{int}(K)$ for all $t\in \mathcal T_d\setminus \mathcal T_a$.

Case 4: $t\notin \mathcal T_a\bigcup \mathcal T_d$. In this case, per Assumption \ref{assum: dot B bar c}, there exists feedback laws {$\lambda_v, \lambda_s$ given as $(\lambda_v(x), \lambda_s(x)) = \lambda(s)$ where
\begin{align*}
    \lambda(x) = \arginf\limits_{u\in \mathcal U}\sup_{\zeta\in F(x, u)}L_\zeta B(x, u).
\end{align*}}\textnormal
Hence, the set $K$ is forward invariant for \eqref{eq: actual system} under $u = \lambda (x)$. 

Thus, it holds that $x(t)\in K$ for all $t\in \textnormal{dom}~x$ and $x(0)\in \textnormal{int}(K)$. Since the set {$K$ is assumed to be compact}, it follows from \cite[Ch. 2, Theorem 1]{aubin2012differential} that $\textnormal{dom}~x = \mathbb R_+$, and thus, the set $K$ is forward invariant for \eqref{eq: actual system}.
\end{proof}

In essence, Theorem \ref{thm: suff safety under attack} provides sufficient conditions for the existence of a control algorithm such that Problem \ref{Problem 1} can be solved. 

\begin{figure}[t]
	\centering
	\includegraphics[width=0.8\columnwidth,clip]{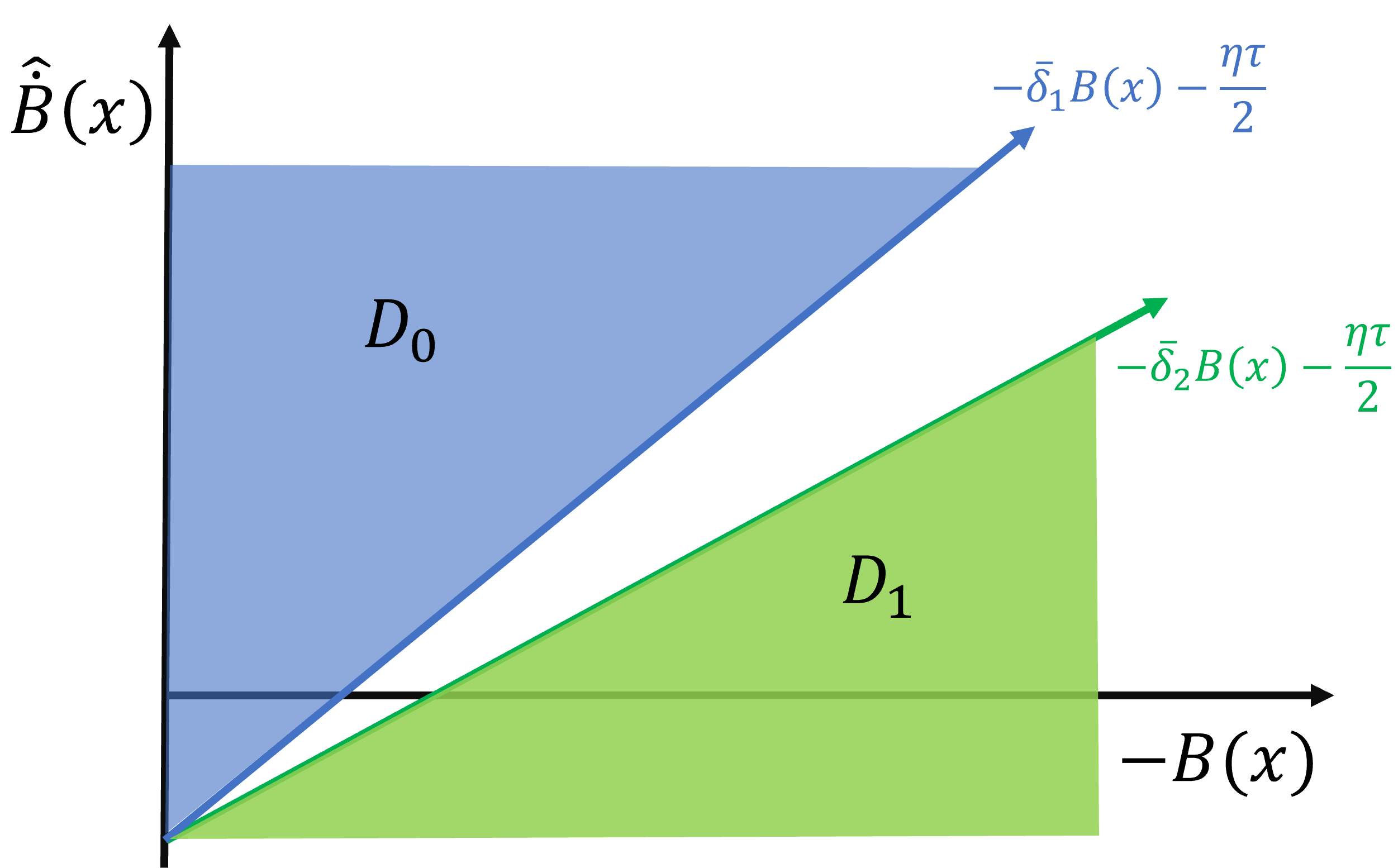}
	\caption{Jump sets $D_0$ and $D_1$ for the hybrid control law.}
	\label{fig:D0 D1 sets}
\end{figure}

\subsection{Hybrid Control Law for Recovery}
In the control assignment given in \eqref{eq: switch input}, we take a somewhat conservative approach and assume that the attack duration is for the maximum possible length $\overline T$. 
This conservatism can be relaxed by using the following control assignment: 
\begin{align}\label{eq: switch input new}
    u(t,x) = \begin{cases}(\lambda_v(x), \lambda_s(x)) & \textrm{if}\quad  t\notin \mathcal T_a \bigcup {\hat {\mathcal T}}_d,\\
    (u_a(t), \lambda_s(x)) & \textrm{if} \quad t\in \mathcal T_a\setminus {\hat {\mathcal T}}_d, \\
    (u_a(t), k_s(x)) & \textrm{if} \quad t\in \mathcal T_a \bigcap {\hat {\mathcal T}}_d, \\
    (\lambda_v(x), k_s(x)) & \textrm{if} \quad t\in {\hat {\mathcal T}}_d \setminus \mathcal T_a, \\
    \end{cases}
\end{align}
where 
\begin{align}\label{eq: T_d estimate set}
    {\hat {\mathcal T}}_d = \left\{t \; \Big |\; \hat{\dot B}(x(t),\tau) > \delta B(x(t), \tau)-\frac{\eta\tau}{2}, x(t)\in K\setminus\textnormal{int}(K_{\bar c})\right\},
\end{align}
is the set of times when the system detects an attack. However, the above switching law can potentially lead to Zeno behavior {due to the control input $u_s$ oscillating between $\lambda_s(x)$ and $\kappa_s(x)$ at the switching surface $D_x = \{x\; |\; \hat{\dot B}((x(t), \tau) = \gamma(t) - \frac{\eta\tau}{2}\}$. To avoid Zeno, inspired by the hybrid control strategy in \cite{wintz2022global}, we define a hybrid control law for the safe input $u_s$ with a \textit{hysteresis}. Consider the following sets:
\begin{align*}
    D_{0,x} & = \Big\{x\; \Big |\; \hat{\dot B}(x)\geq -\bar \delta_1B(x)-\frac{\eta\tau}{2}, x\in K\setminus\textnormal{int}(K_{\bar c})\Big\}, \\
    D_{1,x} & = \Big\{x\; \Big |\; \hat{\dot B}(x)\leq -\bar \delta_2B(x)-\frac{\eta\tau}{2}, x\in K\setminus\textnormal{int}(K_{\bar c})\Big\} 
\end{align*}
with $\bar\delta_2<\bar\delta_1$. Figure \ref{fig:D0 D1 sets} illustrates the sets $D_{0,x}$ and $D_{1,x}$, and the buffer zone between the two sets that would help avoid Zeno behavior. Instead of switching on the set $D_x$, suppose the input $u_s$ switches from $\lambda_s(x)$ to the recovery feedback $k_s(x)$ if the system state $x$ is in the set $D_{0,x}$ and it switches back to the nominal feedback $\lambda_s(x)$ when $x\in D_{1,x}$. Since the sets $D_{0,x}$ and $D_{1,x}$ are closed and disjoint for any $\bar \delta_2<\bar\delta_1$ (see Figure \ref{fig:D0 D1 sets}), the Zeno solutions are not possible. Based on this, define the sets:
\begin{subequations}\label{eq: sets D0 D1}
\begin{align}
    D_0 & = \Big\{z\; \Big |\; \hat{\dot B}(x)\geq -\bar \delta_1B(x)-\frac{\eta\tau}{2}, x\in K\setminus\textnormal{int}(K_{\bar c}), q = 0\Big\}, \\
    D_1 & = \Big\{z\; \Big |\; \hat{\dot B}(x)\leq -\bar \delta_2B(x)-\frac{\eta\tau}{2}, x\in K\setminus\textnormal{int}(K_{\bar c}), q = 1\Big\}, 
\end{align}
\end{subequations}
and 
\begin{align}
    C_0 & = \overline{\Big(\mathbb R^n\times \{0\}\Big)\setminus D_0}, \quad 
    C_1 = \overline{\Big(\mathbb R^n\times \{1\}\Big)\setminus D_1}.
\end{align}
Let $q\in Q\coloneqq \{0, 1\}$ be a logic variable that follows a hybrid dynamics given as
\begin{subequations}\label{eq: logic var dyn}
\begin{align}
    \dot q & = 0 \quad \quad \; \; \hspace{1pt} z\in C, \\
    q^+ & = 1-q \quad  z\in D, 
\end{align}
\end{subequations}
where $z \coloneqq (x,q)\in \mathbb R^{n}\times Q$ is the state of the augmented system. The sets $C$ and $D$ are defined as
\begin{subequations}
\begin{align}
    D & = D_0 \cup D_1,\\ 
    C & = C_0\cup C_1.
\end{align}
\end{subequations}
Given feedback laws $\lambda_s$ and $k_s$, the hybrid control law for the safe input $u_s$ is defined as
\begin{align}\label{eq: hybrid input new}
    u_s(z) & = \begin{cases}\lambda_s(x) & \textnormal{if} \quad (x, q)\in C_0,\\
    k_s(x) & \textnormal{if}\quad (x,q)\in C_1.\end{cases}
\end{align}
Next, we show that Zeno is not possible with the hybrid control law \eqref{eq: hybrid input new}. To this end, let $\{t_j\}_{j = 0}^J$ denote the sequence of jump times with $t_0 = 0$ and $t_{j+1}\geq t_j$, $j\geq 0$. }
\begin{Lemma}
Assume that the functions $\lambda_s, k_s:\mathbb R^n\rightarrow\mathcal U_s$ are continuous. Then, there exists $\zeta>0$ such that $t_{j+1}-t_j\geq \zeta$ for each $j\geq 0$.  
\end{Lemma}
\begin{proof}
The proof is based on showing that $D\cap g(D) = \emptyset$, where $g:\mathbb R^n\times Q\rightarrow\mathbb R^n\times Q$ is the jump dynamics given as
\begin{align}
    z^+ = g(z) \coloneqq \begin{bmatrix}
    g_x(z)\\ g_q(z)
    \end{bmatrix} \coloneqq \begin{bmatrix}
    x\\ 1-q
    \end{bmatrix} \quad z\in D.
\end{align}
For any $z_0 = (x_0, 0)\in D_0$, it holds that $\hat{\dot B}(x_0)\geq -\bar\delta_1 B(x_0)-\frac{\eta\tau}{2}$ and $B(x_0)<0$. Now, consider $z = g(z_0)$. Since $g_x(x) = x$ for each $x\in D$, it holds that $\hat{\dot B}(x)\geq -\bar\delta_1 B(x)-\frac{\eta\tau}{2}$. With $\bar \delta_2<\bar \delta_1$, it holds that $-\bar \delta_2B(x)<-\bar \delta_1 B(x)$. Thus, we have that $\hat{\dot B}(x)\geq -\bar\delta_1 B(x)-\frac{\eta\tau}{2}>-\bar \delta_2 B(x)-\frac{\eta\tau}{2}$, and hence, $g(z_0)\notin D_1$. Conversely, for any $z_1 = (x_1, 1)\in D_1$, it holds that $\hat{\dot B}(x_1) \leq -\bar\delta_2 B(x_1)-\frac{\eta\tau}{2}$. For any $x = g(x_1)$, it holds that $\hat{\dot B}(x) \leq -\bar\delta_2 B(x)-\frac{\eta\tau}{2}< -\bar\delta_1B(x)-\frac{\eta\tau}{2}$ and thus, $g(z_1)\notin D_0$. Hence, $D\cap g(D) = \emptyset$. 

Furthermore, note that the set $K\setminus\textnormal{int}(K_{\bar c})$ is closed, and the functions $B$ and $\hat{\dot B}$ are continuous. Thus, the sets $D_0$ and $D_1$ are closed, and consequently, the set $D$ is also closed. The sets $C_1$ and $C_2$ are closed by definition, and hence, the set $C$ is also closed. The function $F$ is continuous under the conditions of the lemma and the function $g\in \mathcal C^0$, and so, it satisfies the \textit{hybrid basic conditions} (i.e., conditions (A0)-(A3) in \cite{sanfelice2007invariance}). 
 
It remains to be shown that the system trajectories remain bounded. Consider $z\in C_0$. In this case, by definition, $x\in K$ and $\hat {\dot B}\leq -\bar \delta_1B(x)-\frac{\eta\tau}{2}$, which implies that \eqref{eq: safety cond} holds, and hence, the system trajectories do not leave the set $K$. Next, for $z\in C_1$, per \eqref{eq: hybrid input new}, the control input is defined as $u_s = k_s(x)$. Under Assumption \ref{assum: Sc set}, the feedback $k_s$ can render any sublevel set of $B$ in $K\setminus\textnormal{int}(K_{\bar c})$ forward invariant, and thus, the system trajectories do not leave the set $K$. The system state $x$ is continuous on $D$, and hence, we have that $x(t)\in K$ for all times, and with $K$ being compact, the system trajectories are bounded. Hence, from \cite[Lemma 2.7]{sanfelice2007invariance}, it holds that there exists $\zeta>0$ such that $t_{j+1}-t_j\geq \zeta$ for each $j\geq 0$. 
\end{proof}

Thus, there is a non-zero dwell time $\zeta$ between jump times that rules out any Zeno behavior. The closed-loop dynamics under the hybrid control law \eqref{eq: hybrid input new} and attack model \eqref{eq: attack input} is given as\footnote{We omit the argument $(t,j)$ from the functions $z$ and $x$ for the sake of brevity.}
\begin{align}\label{eq: hyb closed loop}
    \mathcal H: \begin{cases}\dot z = \begin{bmatrix}F(x,(u_a(t), u_s(z))\\ 0
    \end{bmatrix} &  t\in \mathcal T_a, z\in C\vspace{3pt}\\ 
    \dot z = \begin{bmatrix}F(x,(\lambda_v(x), u_s(z))\\ 0
    \end{bmatrix} & t\notin \mathcal T_a, z\in C\\
    z^+ = \begin{bmatrix}
    x \\ 1-q
    \end{bmatrix} & z\in D.
    \end{cases}
\end{align}
The following corollary to Theorem \ref{thm: suff safety under attack} holds for the hybrid closed-loop system \eqref{eq: hyb closed loop}. 

\begin{Corollary}\label{cor: suff safety under attack hybrid}
Given system $\mathcal S$ with $F\in \mathcal C^1$, $B\in \mathcal C^2$ and the attack model \eqref{eq: attack model}, suppose that Assumption \ref{assum: d bound} holds, and Assumptions \ref{assum: dot B bar c}-\ref{assum: Sc set} hold for some $\bar c\in (0, c_M)$. 
Then, there exist feedback laws $\lambda:\mathbb R^n\rightarrow\mathcal U$ and $k_s:\mathbb R^n\rightarrow\mathcal U_s$ such that under the effect of the input $u$ in \eqref{eq: hybrid input new}, 
the system trajectories of \eqref{eq: hyb closed loop} satisfy $z(t, j)\in K\times \{0, 1\}$ for all $(t, j)\in \textnormal{dom}~z$ and for all $z(0, 0)\in X_0\times \{0, 1\} = \textnormal{int}(K)\times \{0, 1\}$. 
\end{Corollary}
\begin{proof}
{The proof follows from the similar arguments used in the proof of Theorem \ref{thm: suff safety under attack}. For any $j\in \mathbb N$ such that $(t, j) \in \textnormal{dom}~z$, consider the cases: Case 1: $t\in \mathcal T_a$ and $z(t, j)\in C_0$, Case 2: $t\in \mathcal T_a$ and $z(t, j)\in C_1$, Case 3: $t\notin \mathcal T_a$ and $z(t, j)\in C_1$ and Case 4: $t\notin \mathcal T_a$ and $z(t, j)\in C_0$ (similar to Case 1, Case 2, Case 3 and Case 4 in Theorem \ref{thm: suff safety under attack}, respectively).}
\end{proof}


We note that the main challenge with the proposed method for synthesizing the hybrid control law is finding parameter $\bar c$ for the satisfaction of Assumptions \ref{assum: dot B bar c} or \ref{assum: Sc set}. While Assumptions \ref{assum: dot B bar c} and \ref{assum: Sc set} serve different purposes (as illustrated in the proof of Theorem \ref{thm: suff safety under attack}), it is easy to see that satisfaction Assumption \ref{assum: Sc set} for some $\bar c\in (0,c_M)$ implies Assumption \ref{assum: dot B bar c} holds for the same $\bar c$. Thus, it is sufficient to verify that Assumption \ref{assum: Sc set} holds. One practical method of finding a subset of the safe set $K$, where Assumption \ref{assum: Sc set} holds, is the computationally efficient sampling-based method proposed in \cite{garg2021sampling}. Note that the sampling-based method in \cite{garg2021sampling} relies on a specific sampling method, known as \textit{triangulation} of a sphere. In the next section, we propose a new sampling-based method for computing a subset $K_c\subset K$ such that \eqref{eq: Assum Sc attack} holds for each $x\in K_c\setminus K_{\bar c}$ for some $\bar c\in (c, c_M)$. In the simulation results, we illustrate that the proposed sampling method is faster than the triangulation method used in \cite{garg2021sampling}. 

\section{Sampling Method for Safety}\label{sec: sampling}
Let us consider the forward invariance of the set $K$. Verifying that a set $K\subset \mathbb R^n$ is forward invariant involves checking the inequality \eqref{eq: safety cond} for each $x\in \partial K$, which is a $(n-1)-$ dimensional manifold and also has infinitely many points. Next, consider the inequality \eqref{eq: Assum Sc attack} that needs to be checked for each $K\setminus K_{\bar c}$ for forward invariance of the set $K$ under attacks. 
In this section, we present a sampling-based method to verify such inequalities in a computationally efficient method. In particular, we consider the inequality
\begin{align}
    H(x)\leq 0 \quad \forall x\in \mathcal A_H,
\end{align}
for an appropriate function $H:\mathbb R^n\rightarrow\mathbb R$ and an appropriate set $\mathcal A_H\subset \mathbb R^n$. First, we start with a sampling-based method of verifying an inequality on the boundary of a compact set $K\subset \mathbb R^n$ given as $K = \{x\;|\; B(x)\leq 0\}$ for a sufficiently smooth function $B$. 

\subsection{Sampling-based Method for Forward Invariance without Attacks}\label{sec: FI no attack}

In this section, we present a sampling-based method to compute a set that is forward invariant for \eqref{eq: actual system} when there is no attack. In this case, the function $H$ is defined as
\begin{align}\label{eq: H safety}
    H(x) = \inf_{u\in \mathcal U}\sup_{\zeta\in F(x, u)}L_\zeta B(x,u) +l_B\delta, 
\end{align}
where $l_B$ is the Lipschitz constant of the function $B$, and $\delta$ is as defined in Assumption \ref{assum: d bound}. we propose a sampling-based method of checking a modification of \eqref{eq: H safety} on a finite set of points on $\partial K$ so that conditions of Lemma \ref{lemma: nec suff safety} are satisfied.


We start by making the following assumption on the regularity of the function $H$ defined in \eqref{eq: H safety}.
\begin{Assumption}\label{assum: LC F del B}
The function $H$
is Lipschitz continuous on $K$ with constant $l_H>0$. 
\end{Assumption}

We make the following assumption on the sampling points $\{x_i\}_{\mathcal I}$.  
\begin{Assumption}\label{assum: max dist tring nD}
Given $c\in [0,  c_M)$, the sampling points $\{x_i\}_{\mathcal I}$ and $d_a\in \begin{bmatrix}0,  d_{M,n}\end{bmatrix}$, for each $x\in \partial K_c$, there exists $y\in \{x_i\}_{\mathcal I}$ such that
\begin{align}\label{eq: xj xk dist cond nd}
d_{K_c}(x,y)\leq \frac{d_a}{2},
\end{align}
where $d_{K_c}(x,y)$ denotes the shortest arc-length between the points $x,y\in \partial K_c$.
\end{Assumption}

\noindent Now, we show that 
if the following holds
\begin{align}\label{eq: sup H cond 3D}
    H(x_i)\leq-l_H\frac{d_a}{2} \quad \forall i\in \mathcal I,
\end{align}
where $l_H$ is as defined in Assumption \ref{assum: LC F del B},
then, \eqref{eq: H safety} holds on the boundary $\partial K_c$. 

\begin{Theorem}\label{thm: sufficient sup H cond}
Suppose that the function $H$ defined in \eqref{eq: H safety} satisfies Assumption \ref{assum: LC F del B}. Given $c\in [0, c_M)$, $d_a\in \begin{bmatrix}0,  d_{M,n}\end{bmatrix}$, and the sampling points $\{x_i\}_{\mathcal I}\subset \partial K_c$,
if Assumption \ref{assum: max dist tring nD} and \eqref{eq: sup H cond 3D} hold, then, \eqref{eq: safety cond} holds.
\end{Theorem}
\begin{proof}

Using Assumption \ref{assum: LC F del B} and \eqref{eq: sup H cond 3D}, it holds that
\begin{align*}
   H(\bar x) \leq & H(x) + l_H|\bar x-x|\\
    \leq & -l_H\frac{d_a}{2} + l_H|\bar x-x|,
\end{align*}
for all $x, \bar x\in \partial K_c$. Under Assumption \ref{assum: max dist tring nD}, for every $\bar x\in \partial K_c$, there exists $y\in \{x_i\}_{\mathcal I}$ such that $d_{K_c}(\bar x, y)\leq \frac{d_a}{2}$. Thus, substituting $x = y$ in the above inequality,
we obtain that $H(\bar x)\leq -l_H\frac{d_a}{2} + l_H \frac{d_a}{2} = 0$ for all $\bar x\in \partial K_c$, which completes the proof.
\end{proof}

Thus, the inequality \eqref{eq: sup H cond 3D} can be checked at finitely many points to verify the inequality \eqref{eq: safety cond} for forward invariance of the set $K_c$.

\subsection{Sampling-based Method for Forward Invariance under Attacks}\label{sec: FI attack}

In this section, we propose a method of verifying the inequality \eqref{eq: Assum Sc attack} using a sampling-based method. In particular, we discuss how to modify the inequality \eqref{eq: Assum Sc attack} to facilitate the sampling-based method. For a given $c\in [0, c_M)$ and $\bar c\in (c, c_M)$, define
\begin{align}\label{eq: d c x c}
    (x_c, y_c) & = \arg\max_{x\in K_c}\min_{y\in K_{\bar c}}d_{S}(x,y), \\
    d_c & = d_S(x_c, y_c),
\end{align}
so that the maximum arc-length distance between the sets $K_c$ and $K_{\bar c}$ is $d_c$ at points $x_c\in K_c$ and $y_c\in K_{\bar c}$. Next, consider $\hat c\in (c, \bar c)$ and a set of sampling points $\{x_i\}_{\mathcal I}$ from the set $\{x\; |\; B(x)\leq -\hat c\}$ satisfying Assumption \ref{assum: max dist tring nD} (with $c = \hat c$ in Assumption \ref{assum: max dist tring nD}) for a given $d_a>0$. To this end, define function $H$ as
\begin{align}\label{eq: H safety attack}
    H(x) = \sup\limits_{u_v\in \mathcal U_v}\inf\limits_{u_s\in \mathcal U_s} \sup_{\zeta\in F(x, (u_v,u_s))}L_\zeta B(x, (u_v, u_s))+l_B\delta.
\end{align}
Using this, we show that if the following holds
\begin{align}\label{eq: sup H cond 3D attack}
    H(x_i)\leq-l_H\left(\frac{d_a}{2}+d_c\right) \quad \forall i\in \mathcal I,
\end{align}
where $l_H$ is as defined in Assumption \ref{assum: LC F del B},
then, \eqref{eq: Assum Sc attack} holds on $K_c\setminus \textnormal{int}(K_{\bar c})$. 
Similar to Theorem \ref{thm: sufficient sup H cond}, we can state the following result. 
\begin{Theorem}\label{thm: sufficient sup H cond attack}
Suppose that the function $H$ defined in \eqref{eq: H safety attack} satisfies Assumption \ref{assum: LC F del B}. Given $c\in [0, c_M)$, $\bar c\in (c, c_M)$ $d_a\in \begin{bmatrix}0,  d_{M,n}\end{bmatrix}$, and the sampling points $\{x_i\}_{\mathcal I}\subset \partial K_{\hat c}$ for some $\hat c\in (c,\bar c)$,
if \eqref{eq: sup H cond 3D attack} holds and Assumption \ref{assum: max dist tring nD} holds with $c = \hat c$, then \eqref{eq: Assum Sc attack} holds for each $x\in K_c\setminus \textnormal{int}(K_{\bar c})$.
\end{Theorem}

\begin{proof}
Using the Lipschitz continuity of the function $H$ with constant $l_H>0$, we have that
\begin{align}\label{eq: proof lH c bar c}
    H(x)\leq & H(y) + l_H|x-y|.
\end{align}
It holds that for each $i\in \mathcal I$, $H(x_i) \leq -l_H \left( \frac{d_a}{2}+d_c\right)$. Consider any point $x\in K_c\setminus\textnormal{int}(K_{\bar c})$. Per \eqref{eq: d c x c}, it holds that there exists $\hat x\in S_{\hat c}$ such that $d_S(x, \hat x)\leq d_c$. Furthermore, per Assumption \ref{assum: max dist tring nD}, it holds that for $\hat x\in S_{\hat c}$, there exists $x_i\in \{x_i\}_{\mathcal I}$ such that $d_S(\hat x, x_i)\leq \frac{d_a}{2}$. Thus, we obtain that for each $x\in K_c\setminus \textnormal{int}(K_{\bar c})$, there exists $y\in \{x_i\}_{\mathcal I}$ such that $d_S(x, y)\leq \frac{d_a}{2}+d_c$. Using this in \eqref{eq: proof lH c bar c} along with the fact that $y\in \{x_i\}_{\mathcal I}$, we obtain that
\begin{align*}
    H(x)& \leq  -l_H \left( \frac{d_a}{2}+d_c\right)
+ l_H|x-y| \\
 &\leq  -l_H \left( \frac{d_a}{2}+d_c\right)
+ l_Hd_S(x,y)\leq 0,
\end{align*}
for each $x\in K_c\setminus\textnormal{int}(K_{\bar c})$, which completes the proof. 
\end{proof}

Thus, the inequality \eqref{eq: Assum Sc attack} on a set $K_c\setminus\textnormal{int}(K_{\bar c})$ can be verified by checking a modified inequality on a finitely many sampling points. Note that the results in the previous section are a special case of the results in this section with $c = \bar c$ (resulting in $d_c = 0$). First, we demonstrated how to verify an inequality on the boundary of a set. Then, we demonstrated how to verify an inequality on a \textit{buffer} zone on the boundary of a set. 



\subsection{Sampling of Higher-dimensional Sets}
In this section, we propose a method of sampling the boundary of a general set $K\in \mathbb R^n$ such that Assumption \ref{assum: max dist tring nD} holds. To this end, we use the sampling points from a lower-dimensional set to a higher-dimensional set. Thus, starting from the 2-sphere, we obtain sampling points for the 3-sphere, using which, we obtain sampling points for the 4-sphere, and we repeat the process till we reach $(n-1)$-sphere. Then, we discuss how to sample the boundary of the set $K$ using the samples on $(n-1)$-sphere. 

To illustrate the idea, we show how to obtain sampling points for the boundary of a 2-sphere using the sampling points from a 1-sphere, i.e., a circle. Note that the boundary of a 2-sphere $\partial S_2 = \{(x_1,x_2,x_3)\; |\; x_1^2+x_2^2+x_3^2 = 1\}$ can be parameterized as
\begin{align}
    \partial S_2 = \Big\{(x_1,x_2,x_3)\; \Big|\; & x_1 = \cos(\phi)\sin(\theta), x_2 = \sin(\phi)\sin(\theta), \nonumber \\
    & x_3 = \cos(\theta), \phi\in [0, 2\pi], \theta\in [0, pi]\Big\}.
\end{align}
Observe that for a fixed $\phi\in [0, 2\pi]$, the resulting set is 1-sphere. For a given $d_1>0$, assume that $\{(x_1, x_2)_i\}_{\mathcal I_1}$ is the set of sampling points on the boundary of 1-sphere ($\partial S_1 = \{(x_1, x_2)\; |\; x_1^2+x_2^2 = 1\}$ such that for each $x\in \partial S_1$, there exists $\bar x\in \{(x_1, x_2)_i\}_{\mathcal I_1}$ such that $d_{S_1}(x, \bar x)\leq \frac{d_1}{2}$. In other words, for each $x\in \{(x_1, x_2)_i\}_{\mathcal I_1}$, there exists $y\in \{(x_1, x_2)_i\}_{\mathcal I_1}$ such that $d_{S_1}(x,y)\leq d_1$. For a fixed $\phi$, the following assignment 
\begin{align*}
    \bar x_1 = x_1, \quad  \bar x_2 = x_2\cos(\phi), \quad \bar x_3 = x_2\sin(\phi),
\end{align*}
defines a set of sampling points on the boundary of $K_2$, confined to $K_2\cap S_1$. Let us sample the parameter $\phi$ in $[0, 2\pi]$ with step $d_{1,1}$, i.e., $\phi(i+1)-\phi(i) = d_a$, with $\phi(1) = 0$ and $\phi_N = 2\pi$, where $N = \left\lceil\frac{2\pi}{d_a}\right\rceil$ is the number of sampling points of $\{\phi_i\}$. The sampling points for $\partial S_2$ can be defined as
\begin{align}\label{eq: s1 s2 sampling}
    \{x_i\}_{\mathcal I_3} = \Big\{(\bar x_1, \bar x_2, \bar x_3)\; \Big | \;& \bar x_1 = x_1, \bar x_2 = x_2\cos(\phi), \nonumber\\
    & \bar x_3 = x_2\sin(\phi), \phi\in \{\phi_i\} \Big\}.
\end{align}

\begin{figure}[t]
	\centering
	\includegraphics[width=1\columnwidth,clip]{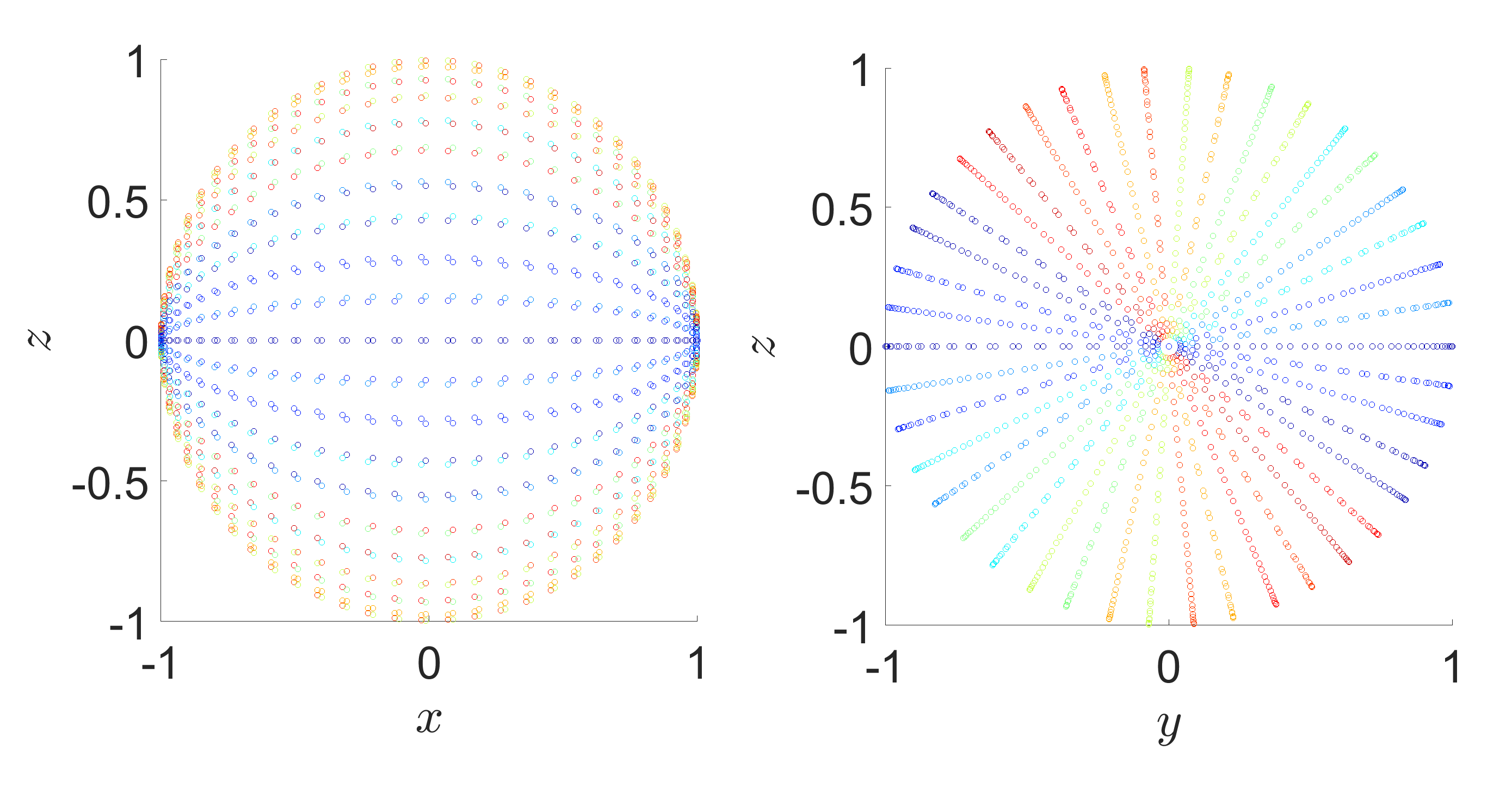}
	\caption{Sampling of 2-sphere using the sampling points of 1-sphere.} 
	\label{fig: 2d to 3d sampling}
\end{figure}

\begin{Lemma}\label{lemma: S1 to S2 samp}
For each $x\in \partial S_2$, there exists $\bar x\in \{x_i\}_{\mathcal I_3}$, where $\{x_i\}_{\mathcal I_3}$ is defined in \eqref{eq: s1 s2 sampling}, such that $d_{S_2}(x,\bar x)\leq \frac{d_1}{2}+\frac{d_{1,1}}{2}$. 
\end{Lemma}
\begin{proof}
For a given $\phi\in \{\phi_i\}$, define $\partial S_1(\phi) = \{(\bar x_1, \bar x_2, \bar x_3)\; |\; \bar x_1 = x_1,\bar x_2 = x_2\cos(\phi),\bar x_3 = x_2\sin(\phi)\}$. Consider the two cases: $x\in \bigcup\limits_{i = 1}^N\partial S_1(\phi_i)$ and $x\notin \bigcup\limits_{i = 1}^N\partial S_1(\phi_i)$. In the first case, under the assumption on the sampling points $\{(x_1, x_2)\}_{\mathcal I_1}$, it holds that there exists $\bar x\in \{x_i\}_{\mathcal I_2}$ such that $d_{S_1}(x, \bar x)\leq \frac{d_1}{2}$. Note that $d_{S_1}(x, \bar x) = d_{S_2}(x, \bar x)$, and thus, it holds that for each $x\in \bigcup\limits_{i = 1}^N\partial S_1(\phi_i)$, there exists $\bar x\in \{x_i\}_{\mathcal I_2}$ such that $d_{S_2}(x,\bar x)\leq \frac{d_1}{2}\leq \frac{d_1}{2}+\frac{d_{1,1}}{2}$. 

In the second case, there exists $i\in \{1, 2, \dots, N\}$ such that $x$ lies in the set $D_i = \Big\{(\bar x_1, \bar x_2, \bar x_3)\; \Big | \; \bar x_1 = x_1, \bar x_2 = x_2\cos(\phi), \bar x_3 = x_2\sin(\phi), \phi\in [\phi_i,\phi_{i+1}] \Big\}$, i.e., the ``spherical strip" on $\partial S_2$ bounded by $\partial S_1(\phi_i)$ and $\partial S_1(\phi_{i+1})$. Let $z^* = \argsup\limits_{z\in \partial S_1(\phi_i)\cup\partial S_1(\phi_{i+1})}d_{S_2}(x,z)$. Since $\phi_{i+1}-\phi_i = d_{1,1}$, it holds that $d_{S_2}(x,z^*) = \frac{d_{1,1}}{2}$. Without loss of generality, assume that $z^*\in \partial S_1(\phi_i)$. Per assumption on the sampling points on $\partial S_1$, it holds that there exists $\bar x\in \partial S_1(\phi_i)$ such that $d_(S_1)(z^*, \bar x)\leq \frac{d_1}{2}$, Combining this with $d_s(x,z^*)$, we obtain that $d_{S_2}(x,\bar x)\leq \frac{d_1}{2}+\frac{d_{1,1}}{2}$. 
\end{proof}

Thus, we illustrate that it is possible to sample a sphere in a higher dimension ($\mathbb R^3$) using the sampling points of a sphere in a lower dimension ($\mathbb R^2$). While the proposed method of obtaining the sampling points on a higher dimension is not an optimal one in terms of the number of sampling points, it is efficient in terms of the computational time required to obtain the sampling points. 
As a consequence, for $x\in \{x_i\}_{\mathcal I_2}$, there exists $y\in \{x_i\}_{\mathcal I_2}$ such that $d_{S_2}(x,y)\leq d_1+d_{1,1}$. Let us define \textit{max-min} inter-sampling distance on $(n-1)-$sphere $K_{n-1}\subset \mathbb R^n$ as
\begin{align}
    d_n = \max_{x\in \partial S_{n-1}}\min_{y(x)\in \partial S_{n-1},y(x)\neq x}d_{S_{n-1}}(x,y(x)),
\end{align}
so that for each $x\in \partial S_{n-1}$, there exists $y\in \partial S_{n-1}$ such that $d_{S_{n-1}}(x,y)\leq d_n$. Note that from Lemma \ref{lemma: S1 to S2 samp}, we obtain that $d_2\leq d_1+d_{1,1}$, where $d_{1,1}$ depends on the sampling of the parameter $\phi$. Now, note that the parameterization of $(n-1)-$sphere is given as
\begin{align*}
    x_1 & = \cos(\phi_1), \\
    x_2 & = \sin(\phi_1)\cos(\phi_2), \\
    x_3 & = \sin(\phi_1)\sin(\phi_2)\cos(\phi_3), \\
    \vdots\\
    x_n & = \prod\limits_{i=1}^n\sin(\phi_i).
\end{align*}
Note also that for a fixed value of $\phi_n$, the resulting manifold is $(n-2)-$sphere. Thus, we can use the same construction to obtain sampling points on $K_{n-1}$ from a set of sampling points on $K_{n-2}$. The relation between the max-min inter-sampling distance is given by $d_n \leq d_{n-1}+d_{n-1,n-1}$, where $d_{n-1,n-1}$ is the sampling step for the parameter $\phi_n$. Using this observation, the following relation can be established
\begin{align*}
    d_n & \leq d_{n-1}+d_{n-1,n-1}\\
    & \leq d_{n-2}+d_{n-2,n-2}
+d_{n-1,n-1},\\
&\leq d_1+\sum_{i=1}^{n-1}d_{i,i}
\end{align*}
where $d_{i-1,i-1}$ is step-size used for sampling the parameter $\phi_{i}$ for obtaining sampling points on $(i)-$sphere from $(i-1)-$sphere, $i\geq 2$. Thus, for a required max-min inter-sampling point distance $d_a>0$ such that $d_n\leq d_a$, the parameters $d_1$ and $d_{i,i}$ for $i\in \{1, 2, \dots, n-1\}$ can be chosen so that $d_1+\sum_{i=1}^{n-1}d_{i,i}\leq d_a$. 

Thus, we proposed a method of obtaining sampling points on a higher-dimensional sphere using the sampling points from a unit sphere in $\mathbb R^2$. 

\subsection{Iterative Algorithm}

Note that there are two parameters that can facilitate satisfaction of \eqref{eq: sup H cond 3D} in the following manner:
\begin{itemize}
    \item Parameter $c$: larger value of $c$ results in smaller values of $d_M$, thus, reducing the right-hand side of \eqref{eq: sup H cond 3D}, and making it easier to satisfy it; and
    \item Number of sampling points $N_p$: larger $N_p$ results in smaller value of $d_{M,n}$.
\end{itemize}

Based on these observations, an iterative algorithm can be formulated to check whether there exists a feasible $c$ and $\bar c$, such that \eqref{eq: sup H cond 3D} holds. 

We formulate our algorithm with the following steps:
\begin{itemize}
    \item[1)] For a given value of $0\leq c\leq c_M$, ${\mathcal U}_v$ and number of sampling points $N_p$, sample $\{x_i\}_{\mathcal I}$ from the set $\partial K_c$ and check if \eqref{eq: sup H cond 3D} holds for all the sampling points;
    \item[2)] Increase $N_p$ and repeat steps 1)-3) until \eqref{eq: sup H cond 3D} holds or the maximum value ($N_{max}$) of $N_p$ is reached.
\end{itemize}
Using these steps, we propose Algorithm \ref{algo: iter algo} which returns a feasible $c\in (0, c_M)$ and $\bar c\in (c, c_M)$ such that \eqref{eq: Assum Sc attack} holds for $x\in K_c\setminus\textnormal{int}(K_{\bar c})$. In other words, this algorithm can compute the set of initial conditions $K_c$, and the set of \textit{tolerable} attacked inputs via ${\mathcal U}_v$ such that the system can satisfy the safety property under attacks. The order in which the parameters $c, {\mathcal U}_v$, and $N_p$ are tuned can be changed, which can potentially change the output of the algorithm. 

\begin{Remark}
The computational complexity of Algorithm \ref{algo: iter algo} is only a function of the number of sampling points $N_p$ (which, in principle, is a user-defined parameter) and \textbf{is independent of the non-linearity of the function $F$ or function $B$}. Note that the minimum number of samples required to generate a simplex on an $(n-1)-$sphere in $\mathbb R^n$ is $(n+1)$, and hence, the initial sampling number $N_{c0}$ in Algorithm \ref{algo: iter algo} \textbf{is linear in the dimension $n$}.\footnote{Note that for the re-sampling step, the initial set of samples $\{x_i\}_{\mathcal I}$ can be used. In particular, for every face $\mathcal X_j$ consisting of points $\{x_j\}$, a new sampling point can be defined as $\bar x_j = x_o + \frac{\tilde x_j-x_o}{|\tilde x_j-x_o|}$ where $\tilde x_j = \frac{1}{n}\sum x_{j_i}$. Since the number of faces in a simplex is linear in $n$, increasing the sampling number has linear computational complexity in $n$.} Thus, unlike reachability based tools in \cite{choi2021robust} where the computational complexity grows exponentially with the system dimension $n$, or SOS based tools \cite{wang2018permissive} that are only applicable to a specific class of systems with linear or polynomial dynamics, Algorithm \ref{algo: iter algo} can be used for general nonlinear system with high dimension.  
\end{Remark}

\begin{Remark}\label{remark: sample general S}
Note that if the set $K$ is convex and $B$ is continuously differentiable, it is diffeomorphic to an $(n-1)-$unit sphere.
Furthermore, when $K$ (equivalently, set $K_c$ for any $c\in (0, c_M)$) is diffeomorphic to an $(n-1)-$unit sphere under a known map $\phi: K \rightarrow \mathcal S_1$, where $\mathcal S_1\subset \mathbb R^n$ is an $(n-1)-$unit sphere, the sampling points on the boundary of the set $K_c$ can be obtained as follows:
\begin{itemize}
    \item[1)] For a given $d_a\in [0, d_{M,n}]$ for sampling on $K_c$, define the corresponding parameter $\bar d_a$ for sampling on $\mathcal S_1$ as 
    \begin{align}
        \hspace{-20pt}\bar d_a \coloneqq \inf_{x, y\in \mathcal S_1}\{d_{\mathcal S_1}(x,y) \; |\; d_{K_c}(\phi^{-1}(x), \phi^{-1}(y)) \geq d_a\}
    \end{align}
    \item[2)] Obtain sampling points $\{\bar x_i\}_{\mathcal I}$ on $\mathcal S_1$ using $\bar d_a$;
    \item[3)] Define sampling points $\{x_i\}_{\mathcal I}$ on $K_c$ as $x_i \coloneqq \phi^{-1}(\bar x_i)$. 
\end{itemize}
\end{Remark}



\begin{algorithm}[t]\label{algo: iter algo}
\SetAlgoLined
\KwData{$F,B,\mathcal U_v, \mathcal U_s, d_a, \epsilon, \varepsilon_1, \varepsilon_2, \delta, \delta_M, N_{max}, N_c, N_{c0}, \gamma_M$}
\textbf{Initialize: }$c = 0, N_p = N_{c0}, \bar \delta = 0$\;
 \While{$c<c_M$}{
         $\bar c = 0 ;$
         \\ \While{$\bar c<c$}{\While{$N_p< N_{max}$}
{
     Sample $\{x_i\}_{\mathcal I}$ from $\{B(x) = -\bar c\}$\;
     \While{$\bar \delta<\delta_M$}{
      \lIf{$\mathcal I_{\bar c} \neq \emptyset$\label{line: h s U}}{
      
      $\bar \delta = \bar \delta +\delta$
      }
      }{
      $N_p = N_p+N_c$; \\
      $\bar \delta = 0$;
      }
      }
      $\bar c= \bar c+\varepsilon_1$;\\
      $N_p = N_0$;
     }
     $c = c+\varepsilon_2$;
     }
 \textbf{Return: }$\bar \delta, c, \bar c$\;
 \caption{Iterative method for computing $c$}
\end{algorithm}
  
Thus, the output of Algorithm \ref{algo: iter algo} returns a $(\bar \delta, c, \bar c)$ such that \eqref{eq: Assum Sc attack} holds for $x\in K_c\setminus\textnormal{int}(K_{\bar c})$.


\section{QP-based Recovery Control Synthesis}\label{sec: QP control}


Next, we present a control syntheses method to design both the nominal feedback $\lambda$ and the safe recovery feedback-law $k_s$ for \eqref{eq: switch input}. 
In order to use a tractable optimization problem for control synthesis, we assume that the system \eqref{eq: actual system} is control affine and is of the form
\begin{align}\label{eq: cont affine}
    \dot x = f(x) + g(x)u + d(t,x),
\end{align}
where $f:\mathbb R^n\rightarrow\mathbb R^n$ and $g:\mathbb R^n\rightarrow\mathbb R^{n\times m}$ are continuous functions. Assume that the input constraint set $\mathcal U$ is given as $\mathcal U = \{u\; |\; Au\leq b\}$.

First, we present a quadratic program (QP) formulation to synthesize the nominal feedback law $\lambda$. Consider the following QP for each $x\in K$:
\small{
\begin{subequations}\label{QP gen}
\begin{align}
\hspace{10pt}\min_{(v, \eta)} \quad \frac{1}{2}|v|^2 + & \frac{1}{2}\eta^2\\
    \textrm{s.t.} \;\quad \; \; Av  \leq &  \; b, \label{C1 cont const}\\
    L_fB(x) + L_{g}B(x)v \leq & -\eta B(x)-l_B\delta,\label{C3 safe const}
\end{align}
\end{subequations}}\normalsize
where $q>0$ is a constant, $l_B$ is the Lipschitz constants of the function $B$. Next, we use a similar QP to compute the safe feedback-law $k_s$. To this end, let $g = [g_s\; g_v]$ with $g_s:\mathbb R^n\rightarrow\mathbb R^{n\times m_s}$, $g_v:\mathbb R^n\rightarrow\mathbb R^{n\times m_v}$ and assume that the input constraint set for $u_s$ is given as $\mathcal U_s = \{u_s\; |\; A_su_s\leq b_s\}$. Now, consider the following QP for each $x\in K\setminus \textnormal{int}(K_{\bar c})$:
\small{
\begin{subequations}\label{QP safe k}
\begin{align}
\hspace{10pt}\min_{(v_s, \zeta)} \quad \frac{1}{2}|v_s|^2 + & \frac{1}{2}\zeta^2\\
    \textrm{s.t.} \;\quad \; \; A_sv_s  \leq &  \; b_s, \label{C1 cont const k}\\
    L_{f}B(x) + L_{g_s}B(x)v_s \leq & -\zeta B(x)- l_B\delta\nonumber\\
    & -\sup_{u_v\in \mathcal U_v}L_{g_v}B(x)u_v,\label{C3 safe const k}
\end{align}
\end{subequations}}\normalsize

Let the solution of the QP \eqref{QP gen} be denoted as $(v^*, \eta^*)$ and that of \eqref{QP safe k} as $(v_s^*, \zeta^*)$. In order to guarantee continuity of these solutions with respect to $x$, we need to impose the strict complementary slackness condition (see \cite{garg2019prescribedTAC}). In brief, if the $i-$the constraint of \eqref{QP gen} (or \eqref{QP safe k}), with $i \in \{1, 2\}$, is written as $G_i(x,z)\leq 0$, and the corresponding Lagrange multiplier is $\lambda_i\in \mathbb R_+$, then strict complementary slackness requires that $\lambda_i^*G(x,z^*)<0$, where $z^*, \lambda_i^*$ denote the optimal solution and the corresponding optimal Lagrange multiplier, respectively. We are now ready to state the following result.  

\begin{Theorem}
Given the functions $F, d, B$ and the attack model \eqref{eq: attack model}, suppose Assumptions \ref{assum: d bound}-\ref{assum: Sc set} hold with $\bar \delta>0$ and $\bar c\in (0, c_M)$.
Assume that the strict complementary slackness holds for the QPs \eqref{QP gen} and \eqref{QP safe k} for all $x\in K$ and $x\in K\setminus \textnormal{int}(K_c)$, respectively. Then, the QPs \eqref{QP gen} and \eqref{QP safe k} are feasible for all $x\in K$ and $x\in K\setminus \textnormal{int}(K_c)$, respectively, $v^*, v_s^*$ are continuous on $\textnormal{int}(K)$ and $x\in \textnormal{int}(K\setminus \textnormal{int}(K_c))$, and the control input defined in \eqref{eq: switch input} with $\lambda(x) = v^*(x)$ and $k_s(x) = v_s^*(x)$ and $t_d = \hat t_d$, where $\hat  t_d$ is defined in \eqref{eq: t_d estimate}, solves Problem \ref{Problem 1} for all $x(0)\in  \textnormal{int}(K)$.
\end{Theorem}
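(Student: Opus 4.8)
The plan is to establish the three assertions in turn: (i) feasibility of both QPs, (ii) continuity of the minimizers $v^*$ and $v_s^*$, and (iii) safety, i.e., that the resulting switched feedback solves Problem~\ref{Problem 1}. For (iii) I would reduce to Theorem~\ref{thm: suff safety under attack}, since that result already yields forward invariance of $S$ once feedback laws with the required invariance properties are exhibited; here those laws are produced explicitly by the QPs, so the remaining work is to verify that the QP constraints encode exactly the conditions Theorem~\ref{thm: suff safety under attack} needs.

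For feasibility of \eqref{QP gen}, I would exhibit a feasible point at every $x\in S$. Writing $H(x,u)=L_fB(x)+L_gB(x)u+l_B\delta$ in the control-affine setting \eqref{eq: cont affine}, Assumption~\ref{assum: dot B bar c} provides $\bar u(x)\in\mathcal U$ with $L_fB(x)+L_gB(x)\bar u(x)\le -\bar\delta B(x)-l_B\delta$ on $S\setminus\textnormal{int}(S_{\bar c})$; since $\partial S\subset S\setminus\textnormal{int}(S_{\bar c})$, the pair $(v,\eta)=(\bar u(x),\bar\delta)$ satisfies \eqref{C1 cont const}--\eqref{C3 safe const} there, while in the interior, where $B(x)<0$, any $v\in\mathcal U$ together with a sufficiently large $\eta$ works because $-\eta B(x)\to+\infty$ whereas $L_gB(x)v$ stays bounded over $\{v:Av\le b\}$. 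For \eqref{QP safe k}, I would use Assumption~\ref{assum: Sc set}: a (near-)minimizer $u_s^\star$ of $\inf_{u_s}\sup_{u_a}H(x,(u_a,u_s))$ satisfies $L_fB(x)+L_{g_s}B(x)u_s^\star+\sup_{u_v\in\mathcal U_v}L_{g_v}B(x)u_v\le -l_B\delta$, so $(v_s,\zeta)=(u_s^\star,0)$ is feasible for \eqref{C1 cont const k}--\eqref{C3 safe const k} on $S\setminus\textnormal{int}(S_{\bar c})$.

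For continuity, I would note that each objective is strictly convex in its decision variables, so the minimizer is unique, and that the constraint data $L_fB$, $L_gB$, $B$ are continuous in $x$ because $B\in\mathcal C^2$ and $f,g$ are continuous. Given the assumed strict complementary slackness, I would invoke the parametric-QP sensitivity result of \cite{garg2019prescribedTAC} to conclude that $x\mapsto v^*(x)$ and $x\mapsto v_s^*(x)$ are continuous on $\textnormal{int}(S)$ and $\textnormal{int}(S\setminus\textnormal{int}(S_{\bar c}))$, respectively. This continuity is precisely what makes $\lambda=v^*$ and $k_s=v_s^*$ admissible feedbacks under the standing existence/uniqueness assumption on closed-loop solutions.

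Finally, for safety I would read the active constraints \eqref{C3 safe const} and \eqref{C3 safe const k} back as robust-CBF conditions. Constraint \eqref{C3 safe const} gives $H(x,v^*(x))\le -\eta^*(x)B(x)$ for all $x\in S$, which at $\partial S$ (where $B=0$) reduces to \eqref{eq: safety cond}; by Lemma~\ref{lemma: nec suff safety}, $\lambda=v^*$ renders $S$ forward invariant when there is no attack. Constraint \eqref{C3 safe const k} guarantees $H(x,(u_a,v_s^*(x)))\le -\zeta^*(x)B(x)$ for every attack $u_a\in\mathcal U_v$ on $S\setminus\textnormal{int}(S_{\bar c})$, which again reduces to \eqref{eq: safety cond} on $\partial S$, so $k_s=v_s^*$ keeps the system in $S$ under the worst-case attack. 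Soundness of the detector, Lemma~\ref{Lemma td hat td relation}, ensures the flag $\hat t_d$ is raised before $H$ can become positive at $\partial S$, so the safe feedback is engaged in time and held for the full horizon $\overline T$; feeding these facts into the four-case argument of Theorem~\ref{thm: suff safety under attack} yields $x(t)\in S$ for all $t\ge 0$ and all $x(0)\in\textnormal{int}(S)$, solving Problem~\ref{Problem 1}. The main obstacle I anticipate is the continuity step: translating strict complementary slackness into genuine continuity of the QP minimizers requires the parametric-optimization machinery of \cite{garg2019prescribedTAC}, and one must verify its regularity hypotheses (uniqueness from strict convexity, continuity of the data, and non-degeneracy of the active set) rather than treating continuity as automatic; a secondary subtlety is arguing that false positives of the detector never compromise safety, which rests on \eqref{C3 safe const k} holding uniformly for all attacks, including the nominal one.
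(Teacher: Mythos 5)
Your proposal is correct and follows the same overall architecture as the paper's proof: establish feasibility of both QPs, obtain continuity of the minimizers from the parametric-QP results of \cite{garg2019prescribedTAC} under strict complementary slackness, and then hand the resulting feedbacks $\lambda = v^*$, $k_s = v_s^*$ to Theorem \ref{thm: suff safety under attack}, whose four-case argument delivers forward invariance of $S$. The one place you genuinely diverge is the feasibility step: the paper argues that Assumptions \ref{assum: dot B bar c} and \ref{assum: Sc set} make $S$ (respectively, sublevel sets of $B$ in $S\setminus\textnormal{int}(S_{\bar c})$) viability domains and then cites \cite[Lemma 6]{garg2019prescribedTAC}, whereas you exhibit feasible points explicitly --- the pair $(\bar u(x), \bar\delta)$ from Assumption \ref{assum: dot B bar c} on $S\setminus\textnormal{int}(S_{\bar c})$, the large-$\eta$ argument where $B(x)<0$, and a minimizer of the inf-sup in Assumption \ref{assum: Sc set} for \eqref{QP safe k}. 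Your construction is more self-contained and makes transparent exactly how each assumption enters; its only soft spot is that $(u_s^\star, 0)$ is exactly feasible for \eqref{QP safe k} only if the infimum in \eqref{eq: Assum Sc attack} is attained (e.g., $\mathcal U_s$ compact): with only near-minimizers and $B(x)=0$ on $\partial S$, the slack term $-\zeta B(x)$ vanishes and cannot absorb the residual $\epsilon$, a boundary case the paper sidesteps by citing the lemma. A further bookkeeping difference: the paper additionally invokes \cite[Lemma 7]{garg2019prescribedTAC} (using compactness of $S$) to obtain existence and uniqueness of the closed-loop solutions before applying Theorem \ref{thm: suff safety under attack}, whereas you lean on the standing uniqueness assumption; since that assumption concerns the system with a given input signal, deriving closed-loop well-posedness from continuity of the synthesized feedbacks, as the paper does, is the cleaner route.
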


\begin{proof}
Per Assumption \ref{assum: dot B bar c}, the set $K$ is a viability domain for the system \eqref{eq: cont affine}. Per Assumption \ref{assum: Sc set}, any sublevel set of $B$ in $K\setminus\textnormal{int}(K_{\bar c})$ is a viability domain for the system \eqref{eq: cont affine} under attack. Thus, feasibility of the QPs \eqref{QP gen} and \eqref{QP safe k} follows from \cite[Lemma 6]{garg2019prescribedTAC}. Per \cite[Theorem 1]{garg2019prescribedTAC}, the respective solutions of the QPs \eqref{QP gen} and \eqref{QP safe k} are continuous on $\textnormal{int}(K)$ and $\textnormal{int}(K\setminus\textnormal{int}(K_{\bar c}))$, respectively. Finally, since the set $K$ is compact, it follows from \cite[Lemma 7]{garg2019prescribedTAC} that the closed-loop trajectories are uniquely defined for all $t\geq 0$. The uniqueness of the closed-loop trajectories, Assumption \ref{assum: d bound} and feasibility of the QPs \eqref{QP gen} and \eqref{QP safe k} for all $x\in K$ and $x\in K\setminus\textnormal{int}(K_{\bar c})$ implies that all the conditions of Theorem \ref{thm: suff safety under attack} are satisfied with $\lambda$ defined as the solution of \eqref{QP gen} (i.e., $\lambda(x) = v^*(x)$) and $k_s$ as the solution of \eqref{QP safe k} (i.e., $k_s(x) = v_s^*(x)$). It follows that the set $\textnormal{int}(K)$ is forward invariant for the system \eqref{eq: cont affine}. 
\end{proof}

Thus, the QPs \eqref{QP gen} and \eqref{QP safe k} can be used to synthesize a nominal and a safe input for a system under attack. Next, we present a numerical case study involving an attack on one of the motors of a quadrotor and demonstrate how the proposed defense mechanism can save the quadrotor from crashing and keep it hovering at the desired altitude.  

\section{Case study}\label{sec: numerical}
We consider a simulation case study involving a quadrotor with an attack on one of its motors.\footnote{A video of the simulation is available at \url{https://tinyurl.com/3xzkute6} and the code is available at: \url{https://github.com/kunalgarg42/InputAttackRecovery}.} The quadrotor dynamics are given as (see \cite{lanzon2014flight,akhtar2013fault}): 
\begin{subequations}
\begin{align}
    \ddot x & = \frac{1}{m}\Big(\big(c(\phi)c(\psi)s(\theta)+s(\phi)s(\psi)\big)u_f-k_t\dot x\Big) \\
    \ddot y & = \frac{1}{m}\Big(\big(c(\phi)s(\psi)s(\theta)-s(\phi)c(\psi)\big)u_f-k_t\dot y\Big)\\
    \ddot z & = \frac{1}{m}\Big(c(\theta)c(\phi)u_f-mg-k_t\dot z\Big)\\
    \dot \phi & = p+qs(\phi)t(\theta)+rc(\phi)t(\theta)\\
    \dot \theta & = qc(\phi)-rs(\phi)\\
    \dot \psi & = \frac{1}{c(\theta)}\big(qs(\phi)+rc(\phi)\big)\\
    \dot p & = \frac{1}{I_{xx}}\Big(-k_rp-qr(I_{zz}-I_{yy})+\tau_p \Big)\\
    \dot q & = \frac{1}{I_{yy}}\big(-k_rq-pr(I_{xx}-I_{zz})+\tau_q \big)\\
    \dot r & = \frac{1}{I_{zz}}\big(-k_rr-pq(I_{yy}-I_{zz})+\tau_r \big),
\end{align}
\end{subequations}
where $m, I_{xx}, I_{yy}, I_{zz}, k_r, k_t>0$ are system parameters, $g = 9.8$ is the gravitational acceleration, $c(\cdot), s(\cdot), t(\cdot)$ denote $\cos(\cdot), \sin(\cdot), \tan(\cdot)$, respectively,  $(x, y, z)$ denote the position of the quadrotor, $(\phi, \theta, \psi)$ its Euler angles and $u = (u_f, \tau_p, \tau_q, \tau_r)$ the input vector consisting of thrust $u_f$ and moments $\tau_p, \tau_q, \tau_r$.
\begin{figure}[t]
	\centering
	\includegraphics[width=0.8\columnwidth,clip]{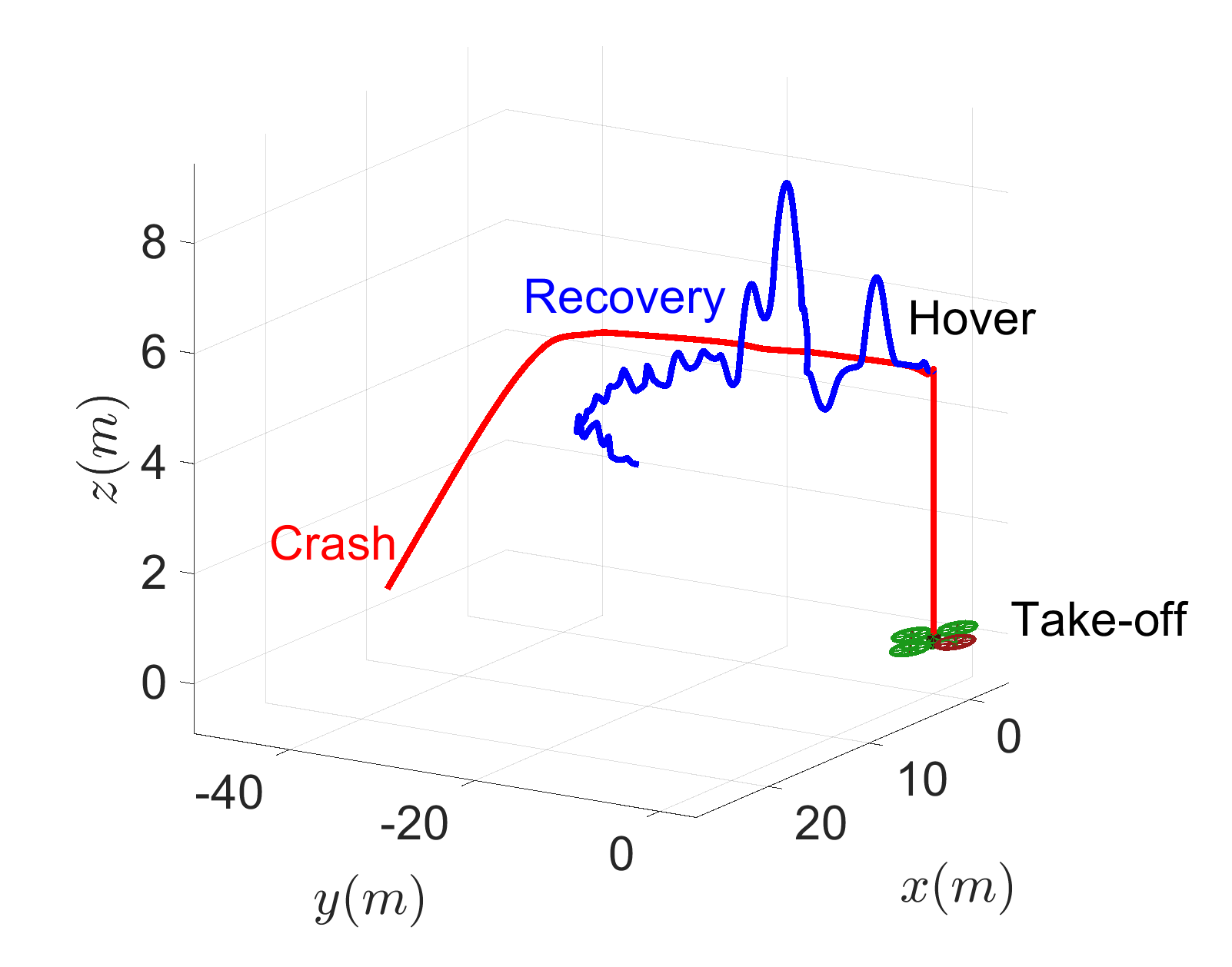}
	\caption{The closed-loop path traced by the quadrotor with the proposed detection mechanism (in blue) and without the detection mechanism (in red). The vulnerable motor is shown in red.}
	\label{fig: quad path}
\end{figure}
\begin{figure}[b]
	\centering
	\includegraphics[width=1\columnwidth,clip]{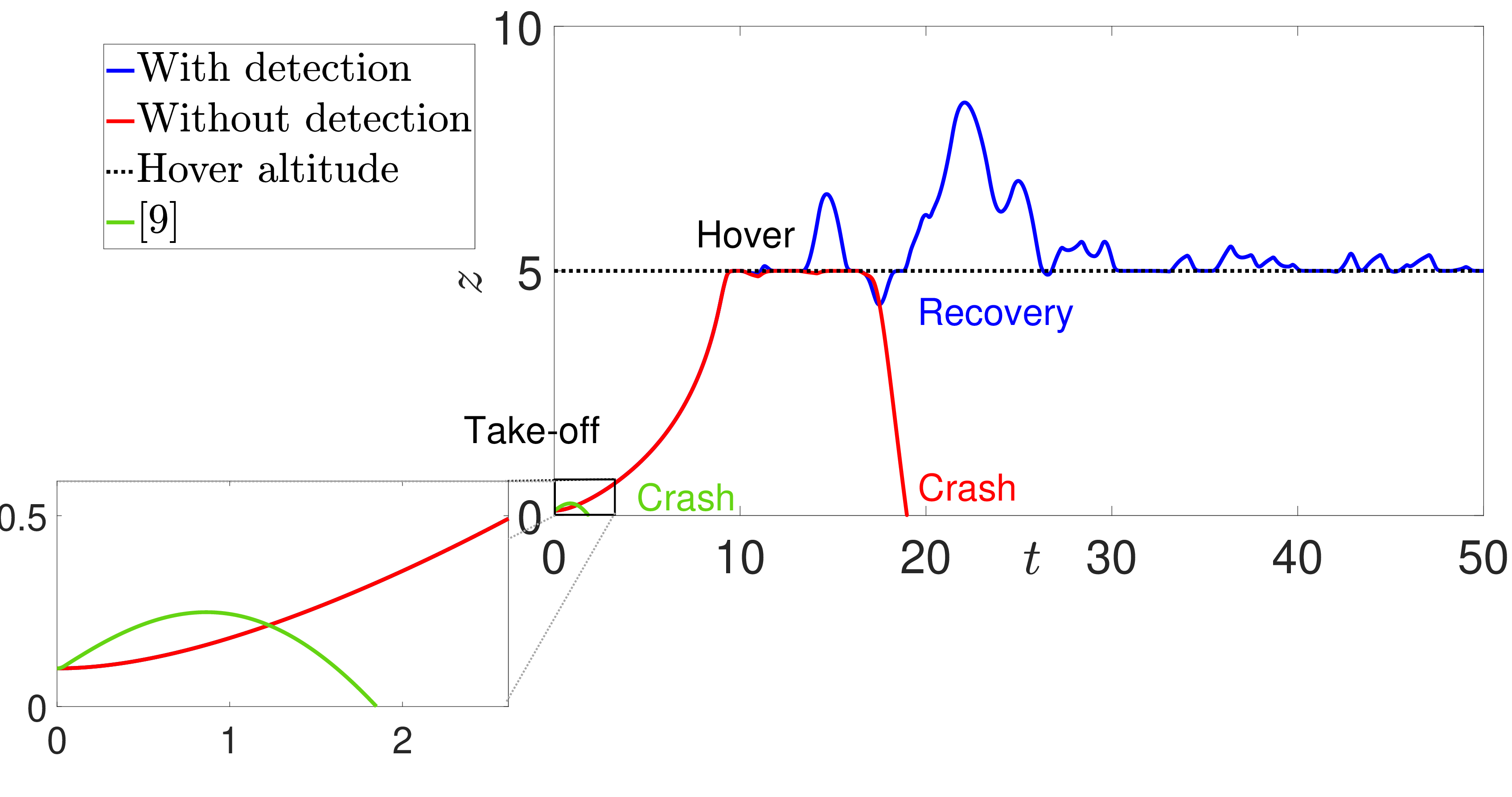}
	\caption{The $z-$coordinate of the closed-loop system with and without the detection mechanism. In the absence of the detection mechanism, the quadrotor crashes (i.e., $z = 0$ m). In the presence of the detection mechanism, the altitude remains close to the desired altitude $z = 5 m$ (shown by black line). The conservative approach in \cite{garg2021sampling}, resulting in crash even without an attack, is shown in green (see the inset plot).}
	\label{fig: xyz}
\end{figure}
\begin{figure}[t]
	\centering
	\includegraphics[width=1\columnwidth,clip]{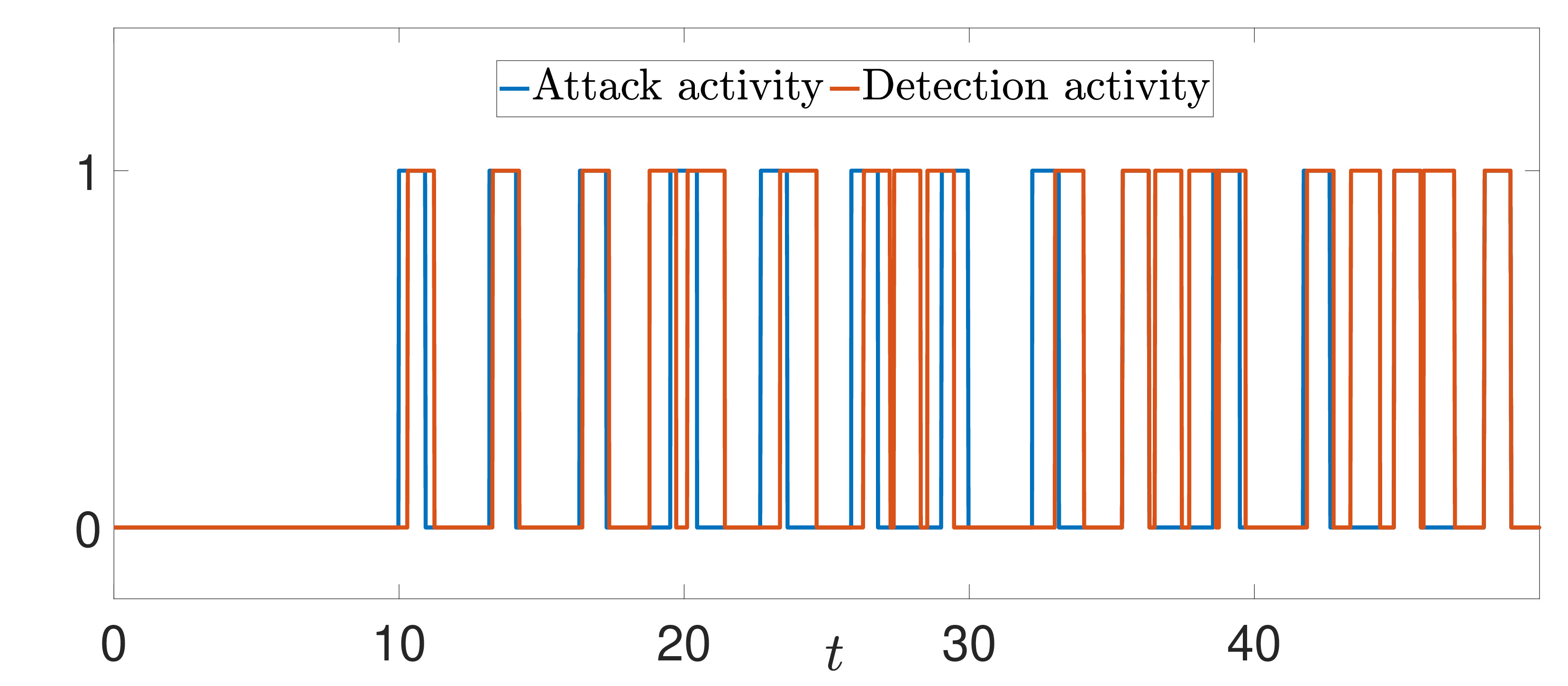}
	\caption{The attack (respectively, the detection) activity where 1 denotes that attack is active (respectively, flagged) and 0, that the attack is non-active (respectively, not flagged).} 
	\label{fig: attack sig}
\end{figure}
\begin{figure}[b]
	\centering
	\includegraphics[width=1\columnwidth, height =2.3in, clip]{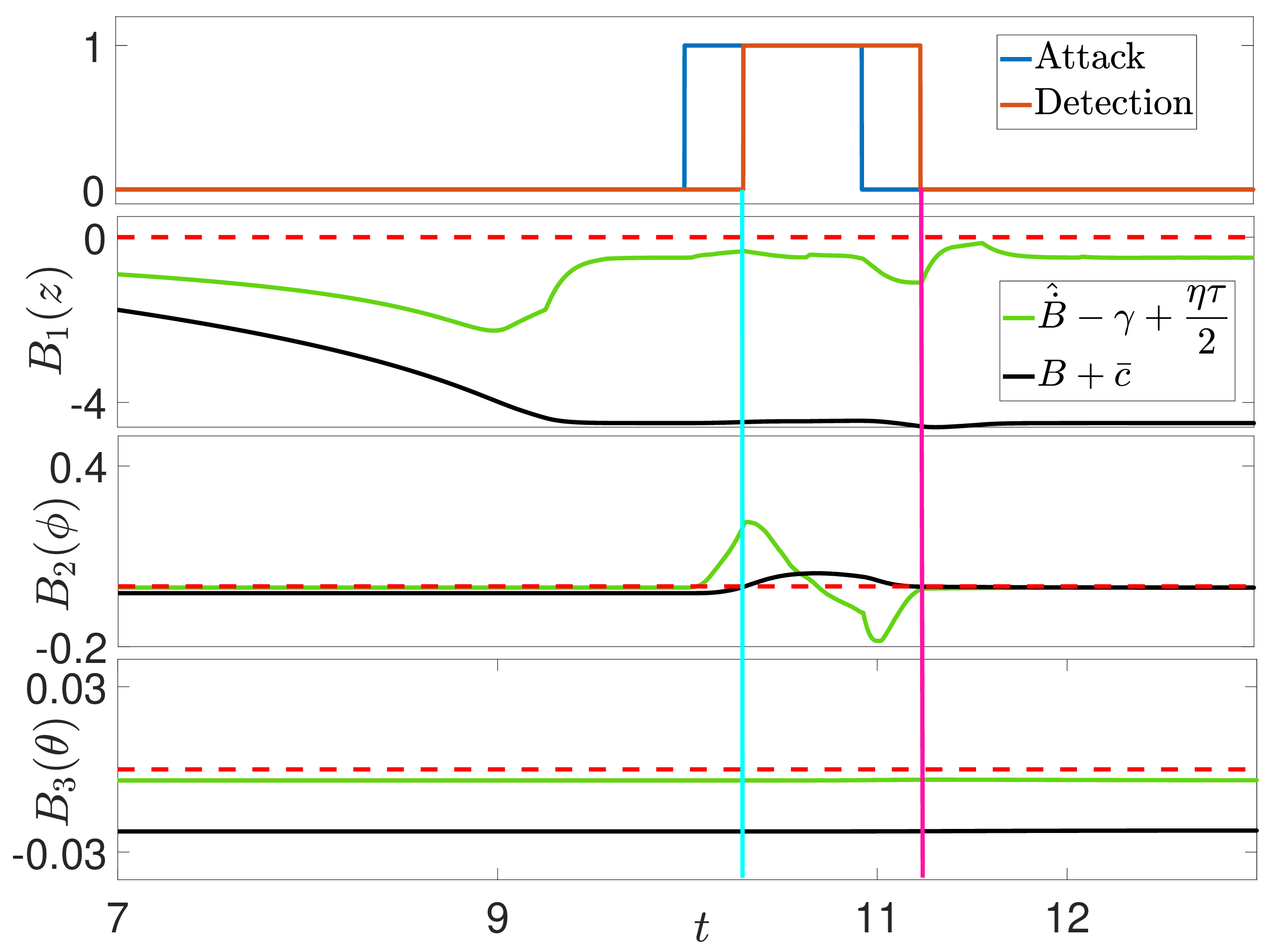}
\caption{The detection mechanism in action (vertical cyan line marks the beginning of the flagging and the vertical pink line, its ending). The attack is flagged per \eqref{eq: t_d estimate adapt} when $B(x(t))+\bar c = 0$ (shown in black line) and $\hat{\dot B}(t)-\gamma(t)+\frac{\eta\tau}{2} = 0$ (shown in green line). The first flag is raised at $t = 10$ second for $B = B_1(z)$. The mechanism keeps the system in the flagged mode for the maximum length of the attack, even if the attack is stopped.} 
	\label{fig: attack detect}
\end{figure}
The relation between the vector $u$ and the individual motor thrusts is given as 
\begin{align}
    \begin{bmatrix}u_f\\\tau_p\\\tau_q\\\tau_r\end{bmatrix} = \begin{bmatrix}1 & 1& 1& 1\\
    0& -l & 0 & l\\
    -l & 0 & l & 0\\
    d & -d  & d & -d
    \end{bmatrix} \begin{bmatrix}f_1\\ f_2\\f_3\\f_4\end{bmatrix},
\end{align}
where $f_i$ is the thrust generated by the $i-$th motor for $i\in \{1, 2, 3, 4\}$, $d, l>0$ are system parameters. We choose the system parameters for simulations as: $I_{xx} = I_{yy} = 0.177$ kg-$\textnormal{m}^2$, $I_{zz} = 0.344$ kg-$\textnormal{m}^2$, $m = 4.493$ kg, $l = 0.1$ m, $d = 0.0024\;\textnormal{m}$, $k_t = 1$ and $k_r = 1.5$ (see \cite{akhtar2013fault}). Furthermore, we consider the bound on each motor given as $|f_i|\leq 27.7$ N for $i \in \{1, 2, 3, 4\}$. We use $\tau = 10^{-3}$ for approximation of $\dot B$. Without loss of generality, we assume that motor \#4 is vulnerable. Note that under an attack, the input-thrust relation reads:
\begin{align}
    \begin{bmatrix}u_f\\\tau_p\\\tau_q\\\tau_r\end{bmatrix} = \begin{bmatrix}1 & 1& 1\\
    0& -l & 0\\
    -l & 0 & l\\
    d & -d  & d 
    \end{bmatrix} \begin{bmatrix}f_1\\ f_2\\f_3\end{bmatrix},
\end{align}
It is not possible to keep all the inputs $(u_f, \tau_p, \tau_q, \tau_r)$ close to its desired value simultaneously under an attack on motor \#4. Thus, we focus on designing a control law to maintain the desired altitude of the quadrotor (through $u_f$) and minimize its oscillations (through $(\tau_p, \tau_q)$). It implies that $\tau_r$ will not be matched with its desired value to control the yaw angle $\psi$, resulting in an uncontrolled yaw angle increase.

We choose the control objective to make the quadrotor hover at location $(0, 0, 5)$, starting from $(0, 0, 0.2)$. Based on the above observation and the fact that $\psi$ does not contribute in changing the altitude of the quadrotor, the safety constraints are to keep the angles $(\phi, \theta)$ in a given bounded range, i.e., $|\phi|\leq \phi_M, |\theta|\leq \theta_M$, for some $\phi_M, \theta_M>0$, and to keep the quadrotor above the ground, i.e., $z>0$. Thus, the safe set is defined as $K = \Big\{(\phi, \theta, z)\; |\; |\phi|\leq \phi_M, |\theta|\leq \theta_M, z\leq -\epsilon\Big\}$. We choose $\phi_M = \theta_M = 0.3$ and $\epsilon = 0.02$. The maximum length of the attack is randomly chosen as $\overline T = 0.934$ seconds and the period of no attack is chosen as $T_{na} = 2.238$ seconds.

\begin{figure}[t]
	\centering
	\includegraphics[width=1\columnwidth,clip]{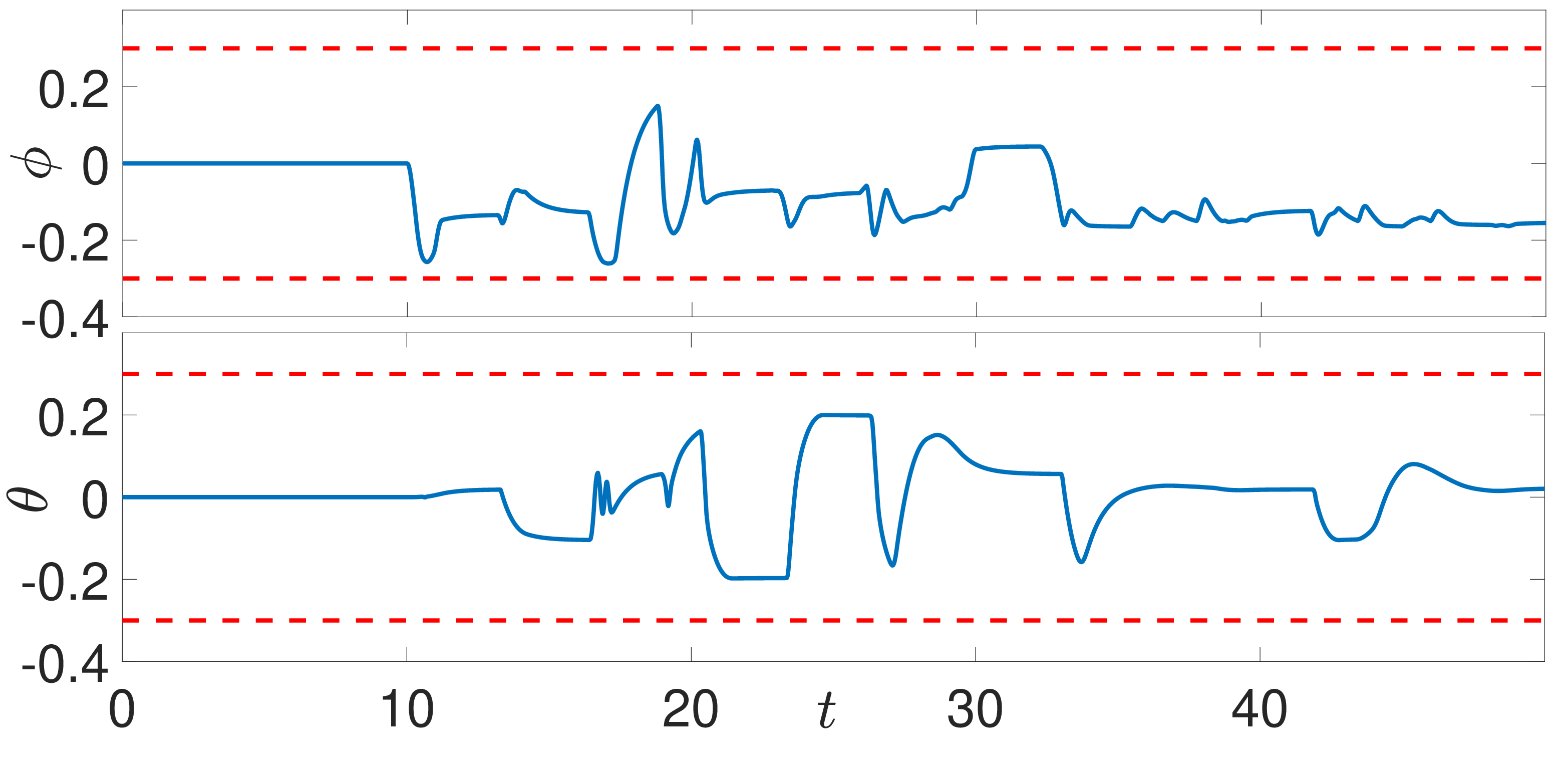}
	\caption{Euler angles $(\phi, \theta)$ of the closed-loop system. The safety constraints $|\phi|\leq \phi_M$ and $|\theta|\leq \theta_M$ are satisfied at all times.}
	\label{fig: eul ang}
\end{figure}

\begin{figure}[b]
	\centering
	\includegraphics[width=1\columnwidth, height = 2.3in, clip]{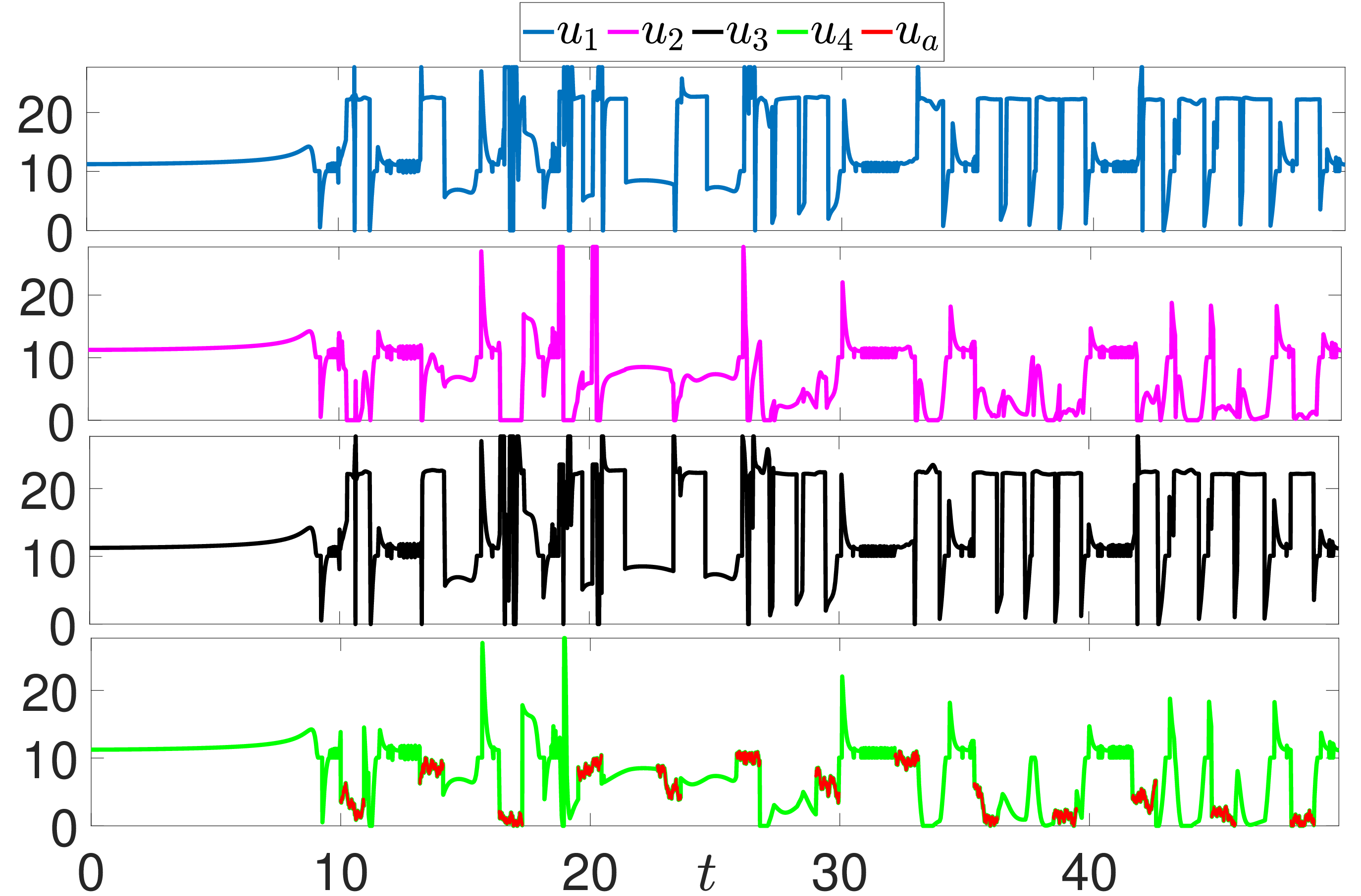}
	\caption{Thrust $f_i$ of each motor. The thrust of motor 4 under attack is shown in red. The switch in the rest of the motors is clearly seen when an attack is flagged.}
	\label{fig: input sig}
\end{figure}

The barrier functions used for enforcing safety are $B_1(z) = -z+0.02$, $B_2(\phi) = |\phi|^2-\phi_M^2$ and $B_3(\theta) = |\theta|^2-\theta_M^2$. The parameters $\bar \delta,\bar c$ for detection are $\bar\delta = 0.1, \bar c = \frac{1}{4}(0.3)^2$. Figure \ref{fig: quad path} shows the closed-loop path traced by the quadrotor. Figure \ref{fig: xyz} plots the position coordinates $(x,y,z)$. The safety constraint $z\leq 0$ is satisfied at all times, and the quadrotor is able to hover at an altitude $z = 5$ m. Figure \ref{fig: attack sig} shows the attack and the detection signal. It can be seen that detection has a non-zero delay during some attacks, and zero delay during some attacks. It can also be seen that some of the attacks are not detected, as they do not fall into the category of \textit{adversarial} attack per Definition \ref{def: advers attack}. Figure \ref{fig: attack detect} illustrates the detection mechanism in action. The attack is flagged according to \eqref{eq: t_d estimate adapt} and remains flagged for the duration $\overline T$. The bound $|f_i|\leq 27.7$ N is satisfied for each motor at all times.  The vulnerable motor is highlighted in green. Figure \ref{fig: eul ang} plots the Euler angles $(\phi, \theta)$. It can be seen that the safety constraint $|\phi|\leq 0.3$ and $|\theta|\leq 0.3$ is satisfied at all times. Finally, Figure \ref{fig: input sig} plots the thrust for each motor under nominal conditions as well as under attack.


Thus, the proposed scheme can successfully detect an attack on a quadrotor motor before the quadrotor crashes. Furthermore, the designed safe input can keep the quadrotor in the safe zone even under attack, thus demonstrating a successful recovery after detection. The conservative approach in \cite{garg2021sampling}, which assumes that the rotor $\#$4 is constantly under attack, fails to keep the quadrotor from crashing even when there is no attack (see Figure \ref{fig: xyz}). In contrast, the proposed approach is non-conservative and reacts to an adversarial attack, thereby not interfering with the system's nominal functionality.


The simulation results illustrate that the approach in \cite{garg2021sampling} is too conservative for the considered example, and that the chosen initial condition does not satisfy the requirements of the framework in \cite{garg2021sampling}. The addition of the detection mechanism removes this conservatism and results in improved system performance, even if there is no attack. 

It is also important to note the difference between fault-tolerant control (FTC) (see e.g. \cite{lanzon2014flight,akhtar2013fault} in the context of quadrotor control). The control scheme under the FTC paradigm assumes that a subset of actuators have failed and are not operating nominally. Furthermore, the focus of FTC-based schemes is to control the system with the available \textit{non-faulty} actuators. In contrast, the focus of the proposed scheme is to not only control the system with the \textit{non-vulnerable} actuators but also to design them in a way that for all possible \textit{attacked} signals, the system is still safe. Thus, the proposed method is robust against any random actuator signal.  
 
\section{Conclusion}\label{sec: conclusion}
We presented a novel attack-detection scheme based on the control Barrier function. In addition, we introduced an online QP-based formulation to design a recovery controller that prevents the system from violating the safety specification.  Our formulation is adaptive, in the sense that the further away the system is from violating safety our recovery controller focuses on performance rather than safety; however, if the system keeps approaching the safety limit, our adaptive mechanism switches to a recovery controller to counteract the potential attack. We demonstrated the efficacy of the proposed method on a simulation example involving an attack on a quadrotor motor. 


This work opens up a line of research on non-conservative control design for CPS security with provable guarantees. Provable safety guarantees when the system sensors are under attack is still an open problem. Future work involves studying more general attacks on CPS, such as attacks on system sensors and simultaneous attacks on system sensors and actuators. As noted in Remark 1, our future investigation also includes studying methods of estimating the time when the attack has stopped.


\bibliographystyle{IEEEtran}
\bibliography{myreferences}

\begin{biography}[{\includegraphics[width=1in,height=1.25in,clip,keepaspectratio]{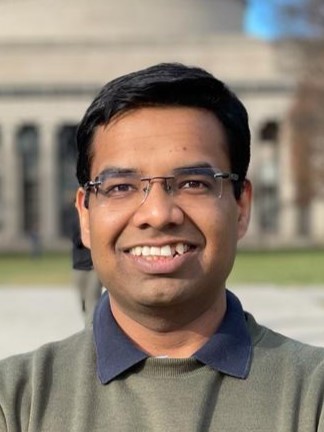}}]{Kunal Garg} received his Master of Engineering and PhD degrees in Aerospace Engineering from University of Michigan in 2019 and 2021, and his Bachelor of Technology degree in Aerospace Engineering from the Indian Institute of Technology, Mumbai, India in 2016. He is currently an assistant professor in the Mechanical and Aerospace Engineering Program at the School for Engineering of Matter, Transport, and Energy at Arizona State University. Previously, he was a postdoctoral associate in the Department of Aeronautics and Astronautics at Massachusetts Institute of Technology and before that, with the Department of Electrical Engineering and Computer Science at the University of California, Santa Cruz. His research interests include robust multi-agent path planning, switched and hybrid system-based analysis and control synthesis for multi-agent coordination, finite- and fixed-time stability of dynamical systems with applications to control synthesis for spatiotemporal specifications, and continuous-time optimization. He is a member of the IEEE.
\end{biography}

\begin{biography}[{\includegraphics[width=1in,height=1.25in,clip,keepaspectratio]{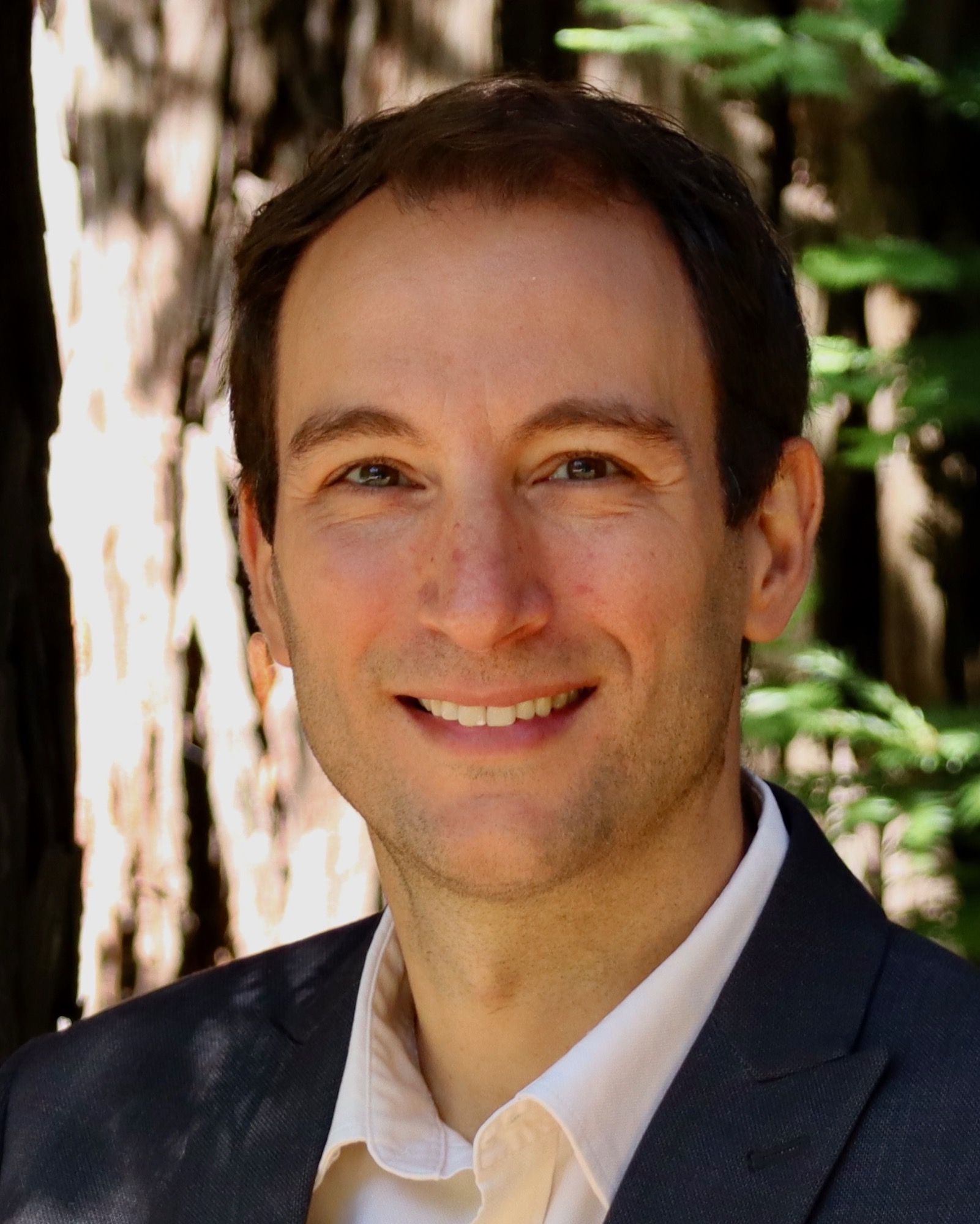}}]{Ricardo G. Sanfelice} received the B.S. degree in Electronics Engineering from the Universidad de Mar del Plata, Buenos Aires, Argentina, in 2001, and the M.S. and Ph.D. degrees in Electrical and Computer Engineering from the University of California, Santa Barbara, in 2004 and 2007, respectively. In 2007 and 2008, he held postdoctoral positions at the Laboratory for Information and Decision Systems at the Massachusetts Institute of Technology and at the Centre Automatique et Systèmes at the École de Mines de Paris. In 2009, he joined the faculty of the Department of Aerospace and Mechanical Engineering at the University of Arizona, Tucson, where he was an Assistant Professor. In 2014, he joined the University of California, Santa Cruz, where he is currently Professor and Chair in the Department of Electrical and Computer Engineering. Prof. Sanfelice is the recipient of the 2013 SIAM Control and Systems Theory Prize, the National Science Foundation CAREER award, the Air Force Young Investigator Research Award, the 2010 IEEE Control Systems Magazine Outstanding Paper Award, and the 2020 Test-of-Time Award from the Hybrid Systems: Computation and Control Conference. He is Associate Editor for Automatica, Communicating Editor for the Journal of Nonlinear Science, and a Fellow of the IEEE. His research interests are in modeling, stability, robust control, observer design, and simulation of nonlinear and hybrid systems with applications to power systems, aerospace, and biology.
\end{biography}

\begin{biography}[{\includegraphics[width=1in,height=1.25in,clip,keepaspectratio]{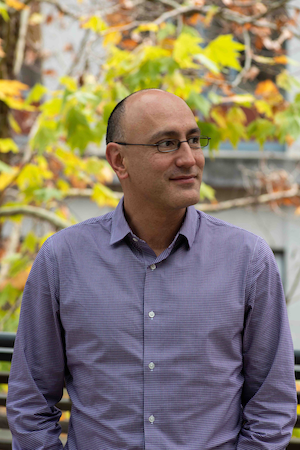}}]{Alvaro A. Cardenas} is a Professor of Computer Science and Engineering at the University of California, Santa Cruz. Before joining UCSC he was the Eugene McDermott Associate Professor of Computer Science at the University of Texas at Dallas, a postdoctoral scholar at the University of California, Berkeley, and a research staff member at Fujitsu Laboratories. He holds M.S. and Ph.D. degrees from the University of Maryland, College Park, and a B.S. from Universidad de Los Andes in Colombia. His research interests focus on cyber-physical systems security, including autonomous vehicles, drones, smart home devices, and SCADA systems controlling the power grid and other critical infrastructures. He is the recipient of the NSF CAREER award, the 2018 faculty excellence in research award from the Erik Johnson School of Engineering and Computer Science, the Eugene McDermott Fellow Endowed Chair at UTD, and the Distinguished Service Award from the IEEE Computer Society Technical Committee on Security and Privacy. He has also received best paper awards from various venues, including the ACM CPS \& IoT Security Workshop, IEEE Smart Grid Communications Conference and the U.S. Army Research Conference. 
\end{biography}

\end{document}